\title{Approximating Unique Games Using Low Diameter\\ Graph Decomposition}
\author{Vedat Levi Alev\footnote{Supported by the GO-Bell Scholarship and the David R.~Cheriton Graduate Scholarship. E-mail: \texttt{vlalev@uwaterloo.ca}}\\
University of Waterloo
\and Lap Chi Lau\footnote{Supported by NSERC Discovery Grant 2950-120715 and NSERC Accelerator Supplement 2950-120719. E-mail: \texttt{lapchi@uwaterloo.ca}}\\
University of Waterloo}
\date{}
\def\UG{\mathsf{UG}}
\def\2Lin{\mathsf{2Lin}}
\def\M2Lin{\mathsf{Max}\textrm{-}\2Lin}
\def\Mi2Lin{\mathsf{Min}\textrm{-}\2Lin}
\def\G2Lin{\Gamma\textrm{-}\2Lin}
\DeclareMathOperator{\SAT}{\mathsf{SAT}}
\DeclareMathOperator{\UNSAT}{\mathsf{UNSAT}}
\def\Ins{\mathfrak I}
\theoremstyle{plain}
\newtheorem{theorem}{Theorem}[section]
\newtheorem{conjecture}[theorem]{Conjecture}
\newtheorem{lemma}[theorem]{Lemma}
\newtheorem{corollary}[theorem]{Corollary}
\theoremstyle{definition}
\newtheorem{definition}[theorem]{Definition}
\theoremstyle{enumtheo}
\newtheoremstyle{break}
  {\topsep}
  {\topsep}
  {}
  {}
  {\bfseries}
  {.}
  {\newline}
  {}
\theoremstyle{break}
\newtheorem{algo}[theorem]{Algorithm}
\newenvironment{algorithm}[3]
        {\begin{boxedminipage}{\textwidth}\begin{algo}[#1]
        {\begin{tabular}{r l}
        \textbf{Intput} & #2\\
        \textbf{Output} & #3
        \end{tabular}\par\enskip}}
        {\end{algo}\end{boxedminipage}}
\def\bNP{\mathbf{NP}}
\def\RR{\mathbb{R}}
\def\ZZ{\mathbb{Z}}
\def\one{\mathbf 1}
\def\LP{\mathsf{LP}}
\def\ee{\varepsilon}
\def\Ins{\mathfrak{I}}
\def\Par{\mathcal{A}}
\def\Cc{\mathcal{C}}
\DeclareMathOperator{\Exp}{\mathbb E}
\DeclareMathOperator{\Pp}{\mathbb P}
\DeclareMathOperator{\OO}{\mathcal O}
\DeclareMathOperator{\poly}{poly}
\DeclarePairedDelimiter\set{\lbrace}{\rbrace}
\DeclarePairedDelimiter\sqbr{[}{]}
\DeclarePairedDelimiter\Abs{|}{|}
\begin{document}

\maketitle

\begin{abstract}
We design approximation algorithms for Unique Games when the constraint graph admits good low diameter graph decomposition.
For the $\M2Lin_k$ problem in $K_r$-minor free graphs, when there is an assignment satisfying $1-\ee$ fraction of constraints, we present an algorithm that produces an assignment satisfying $1-O(r\ee)$ fraction of constraints, with the approximation ratio independent of the alphabet size.
A corollary is an improved approximation algorithm for the {\sf Min-UnCut} problem for $K_r$-minor free graphs.
For general Unique Games in $K_r$-minor free graphs, we provide another algorithm that produces an assignment satisfying $1-O(r \sqrt{\ee})$ fraction of constraints.

Our approach is to round a linear programming relaxation to find a
minimum subset of edges that intersects all the inconsistent
cycles. We show that it is possible to apply the low diameter graph
decomposition technique on the constraint graph directly, rather than
to work on the label extended graph as in previous algorithms for
Unique Games.  The same approach applies when the constraint graph is
of genus $g$, and we get similar results with $r$ replaced by $\log g$
in the $\M2Lin_k$ problem and by $\sqrt{\log g}$ in the general
problem.  The former result generalizes the result of Gupta-Talwar for
Unique Games in the $\M2Lin_k$ case, and the latter result generalizes
the result of Trevisan for general Unique Games.
\end{abstract}

\section{Introduction}
For a given integer $k \ge 1$, an undirected graph $G = (V, E)$ and a
set $\Pi = \set{\pi_{uv} : uv \in E}$ of permutations on $[k]$
satisfying $\pi_{uv} = \pi_{vu}^{-1}$, the Unique Games problem with
alphabet size $k$ (denoted by $\UG_k$) is the problem of finding an
assignment $x: V \to [k]$ to the vertices such that the number of
edges $e = uv \in E$ satisfying the constraint $\pi_{uv}(x(u)) = x(v)$
is maximized.  The value $\SAT(\Ins)$ of a Unique Games instance $\Ins
= (G,\Pi)$ is defined as,
\[ \SAT(\Ins) = \max_{x:V \to [k]}
  \frac{1}{|E|} \sum_{uv \in E}\one[\pi_{uv}(x(u)) =
    x(v)]\]
i.e.~ the maximum fraction of satisfiable constraints over all
assignments $x$.  We define $\UNSAT(\Ins) = 1 - \SAT(\Ins)$ as the
minimum fraction of unsatisfied constraints.

The Unique Games Conjecture of Khot \cite{KhotUG} postulates that
it is $\bNP$-hard to distinguish whether a given instance $\Ins = (G,
\Pi)$ of the Unique Games problem is almost satisfiable or
almost unsatisfiable,
and the problem becomes harder as the alphabet size $k$ increases.
\begin{conjecture}[The Unique Games Conjecture, \cite{KhotUG}]\label{conj:ugc}
  For every $\ee > 0$, there exists an integer $k := k(\ee)$, such
  that the decision problem of whether an instance $\Ins$ of $\UG_{k}$
  satisfies $\SAT(\Ins) \geq 1-\ee$ or $\SAT(\Ins) \leq \ee$ is
  $\bNP$-hard.
\end{conjecture}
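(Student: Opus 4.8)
The ``statement'' here is the Unique Games Conjecture, so what follows is a plan of attack rather than a proof; I will mark precisely the point at which the plan is not known to go through. The plan is to prove a hardness gap for $\UG_k$ via the PCP-composition paradigm that underlies essentially every tight inapproximability result. First I would fix an \emph{outer verifier}: by the PCP Theorem together with parallel repetition, the Label Cover problem with projection constraints is $\bNP$-hard to distinguish between value $1$ and value $\delta$, for any constant $\delta>0$, with alphabet size growing as $\delta\to 0$. Then I would compose this with an \emph{inner verifier} built from the long code: each Label Cover variable is replaced by a block of $2^{[k]}$ Boolean coordinates holding a supposed long-code encoding of that variable's label, and the inner test picks a projection constraint, samples a slightly-noisy pair of points in the two blocks that respects the projection, and accepts when the two codewords agree. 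Completeness is the routine direction: an honest Label Cover labelling induces honest long codes that pass this test with probability at least $1-\ee$, where $\ee$ is the noise rate, matching the ``$\SAT(\Ins)\ge 1-\ee$'' side of Conjecture~\ref{conj:ugc}.

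The soundness analysis is the technical heart and, modulo the obstacle below, is by now standard Fourier analysis. One assumes the composed instance has value at least $\ee' \gg \delta$, expands each block's acceptance function in the character basis, and reads off any coordinate with large \emph{influence} on a block as a candidate label for the corresponding Label Cover variable; the noise (Bonami--Beckner) operator together with the Majority-Is-Stablest phenomenon and the invariance principle of Mossel, O'Donnell and Oleszkiewicz shows that a function passing the test with probability bounded away from the ``random guessing'' value must be essentially a dictator, so it has a small set of influential coordinates. List-decoding these coordinates yields a Label Cover labelling of value bounded below in terms of $\ee'$, contradicting the hardness of the outer verifier once $\delta$ is small enough; a ``smooth'' (repeated) variant of the outer verifier is used to guarantee the decoded labels are consistent.

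The step I expect to be the main obstacle --- the one that in fact keeps this a conjecture --- is forcing the produced constraints to be \emph{bijections} on $[k]$ while keeping the near-perfect-completeness gap. The outer verifier's constraints are projections, with several preimages per label, and every known long-code-type inner test turns such a projection into a non-bijective relation between blocks; padding or symmetrizing it into a permutation destroys soundness. The closest known progress --- the $2$-to-$1$ Games Theorem and the Grassmann-graph expansion results behind it --- proves $\bNP$-hardness of a $(1-\ee,\ee)$ gap for $2$-to-$1$ games, and via known reductions yields $\bNP$-hardness of Unique Games with completeness \emph{bounded away from $1$}; but upgrading this to ``value $\ge 1-\ee$ versus value $\le \ee$'' for strictly $1$-to-$1$ constraints remains open. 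So my honest proposal is conditional: given a gadget that either constructs an outer verifier with bijective constraints directly, or converts $d$-to-$1$ hardness into $1$-to-$1$ hardness with no loss in completeness, the composition and the Fourier-analytic soundness above go through; building that gadget is exactly the open core of Conjecture~\ref{conj:ugc}.
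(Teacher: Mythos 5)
This statement is the Unique Games Conjecture itself: the paper does not prove it, but merely states it and cites Khot \cite{KhotUG}, so there is no proof in the paper to compare your attempt against. You have correctly recognized this --- your ``proposal'' is explicitly conditional and identifies the genuine open core rather than fabricating an argument, which is the right call. Your sketch of the standard route (Label Cover from the PCP Theorem plus parallel repetition, long-code inner verifier, Fourier-analytic soundness via influences, noise operators, and the invariance principle) is an accurate description of how tight UG-based and UG-like hardness results are assembled, and your identification of the obstacle --- that projection constraints cannot be converted into bijections on $[k]$ without either losing near-perfect completeness or losing soundness, with the $2$-to-$1$ Games Theorem giving only completeness bounded away from $1$ for the unique case --- is exactly the reason Conjecture~\ref{conj:ugc} remains open. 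Since neither you nor the paper proves the statement, there is no gap to report beyond the one you have already, correctly, flagged yourself. For the purposes of this paper the conjecture is only context: the actual results (Theorems~\ref{thm:g2linalg} and~\ref{thm:ugkalg}) are unconditional algorithmic statements and do not depend on the conjecture's truth.
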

The Unique Games Conjecture has attracted much attention over the
years, due to its implications regarding the hardness of approximation
for many $\bNP$-hard problems~\cite{KhReVC,G2LinHard,RaghUGC}.  An
important case of Unique Games is the $\M2Lin_k$ problem when the
constraints are of the form $x_u - x_v \equiv c_{uv} \pmod k$ for $uv
\in E$.  This problem is shown to be as hard as the general case of
the Unique Games problem by Khot et al.~\cite{G2LinHard}.  The {\sf
  Max-Cut} problem is a well-studied special case of $\M2Lin_2$ where
$x_u - x_v \equiv 1 \pmod 2$ for $uv \in E$.  Assuming the Unique
Games Conjecture, Khot et al.~\cite{G2LinHard} proved that it is
$\bNP$-hard to distinguish {\sf Max-Cut} instances where the optimal
value is at least $1-\ee$ from instances where the optimal value is at
most $1-\Theta(\sqrt{\ee})$.


There have been several efforts in designing polynomial time
approximation algorithms for Unique
Games~\cite{KhotUG,TrevUGC,GTUGC,CMM0,CMM}, where the objective is to
minimize the number of unsatisfied constraints.  Let $\Ins$ be the
given instance of $\UG_k$ with $n$ variables and $\UNSAT(\Ins)=\ee$.
Trevisan~\cite{TrevUGC} gave an SDP-based algorithm that provides an assignment which violates at most an $\OO(\sqrt{\ee \log n})$ fraction of the constraints.
Gupta and Talwar~\cite{GTUGC} gave an LP-based algorithm that provides
an assignment which violates at most an $\OO(\ee \cdot \log n)$
fraction of the constraints.  Charikar, Makarychev, and Makarychev
\cite{CMM0} gave an SDP-based algorithm which finds an assignment
violating at most a $\OO(\sqrt{\ee \log k})$ fraction of constraints,
where $k$ is the alphabet size.  Chlamtac, Makarychev, and Makarychev
\cite{CMM} gave another SDP-based algorithm which finds an assignment
violating at most an $\OO(\ee \cdot \sqrt{\log k \log n})$-fraction of
the constraints.

There are also some previous works exploiting the structures of
the constraint graphs.  Arora, Barak and Steurer~\cite{ABS} presented
a sub\-exponential time algorithm to distinguish the two cases in the
Unique Games Conjecture.  Their approach uses the spectral information
of the constraint graph.  If the Laplacian matrix of the constraint
graph has only a few small eigen\-values, then they extend the
subspace enumeration approach of Kolla~\cite{Kolla} to search over
this eigen\-space for a good assignment.  On the other hand, if there
are many small eigen\-values, they give a graph decomposition procedure
to delete a small fraction of edges so that each component in the
remaining graph has only a few small eigen\-values.  Combining these two
steps carefully gives their sub\-exponential time algorithm. There is
also an SDP-based propagation rounding approach to find a good
assignment when the constraint graph is an expander~\cite{ExpUGC} and
more generally when the Laplacian matrix of the constraint graph has
only a few small eigen\-values~\cite{BRS,GS}. These gave an alternative
SDP-based sub\-exponential time algorithm for the Unique Games
Conjecture.


Our initial motivation is to study the Unique Games problem when the
Laplacian matrix of the constraint graph has many small eigen\-values,
as there are no known good approx\-imation algorithms for Unique Games
in these graphs.  The most natural graph family possessing this
property is the class of graphs without a $K_r$ minor, where a graph
$H$ is a minor of $G$ if $H$ can be obtained from $G$ by deleting and
contracting edges, and $K_r$ is the complete graph with $r$ vertices.
Kelner et al.~\cite{Deform2}, after a sequence of
works~\cite{Deform1,SpectralWorks,DeformBG}, proved that the $k$-th
smallest eigen\-value of the Laplacian matrix of a bounded degree
$K_r$-minor free graph is $\OO(\poly(r)\cdot k/n)$, showing that there are
many small eigen\-values.  The class of $K_r$-minor free graphs is well
studied and is known to contain the class of planar graphs and the
class of bounded genus graphs, where a graph is of genus $g$ if the
graph can be embedded into a surface having at most $g$ handles
without edge crossings.  There are different (non-spectral) techniques
in designing approximation algorithms for various problems in
$K_r$-minor free graphs (see e.g.~\cite{BiDim,DHK}), including
problems that are known to be harder than Unique Games.  This leads us
to the question of whether we can extract those ideas to design better
algorithms for Unique Games.


\subsection{Our Results}

In this paper, we consider the problem of approximately minimizing the
number of unsatisfied constraints in an $\UG_k$ instance $\Ins = (G, \Pi)$,
when the constraint graph $G$ is $K_r$-minor free.
Our first theorem is for the $\M2Lin_k$ problem.

\begin{theorem}\label{thm:g2linalg}
Given a $\M2Lin_k$ instance $\Ins = (G, \Pi)$ where $G$ is a
$K_r$-minor free graph and $\UNSAT(\Ins) = \ee$ (respectively where
$G$ is of genus at most $g$), there is an LP-based polynomial time
algorithm which outputs an assignment that violates at most an $\OO(r
\cdot \ee)$ fraction of constraints (respectively at most a $\OO(\log
g \cdot \ee)$ fraction of constraints).
\end{theorem}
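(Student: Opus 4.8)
The plan is to reduce the $\M2Lin_k$ problem to a covering problem on the constraint graph and then exploit low diameter decomposition. Given the instance $\Ins = (G,\Pi)$, call a cycle in $G$ \emph{inconsistent} if composing the constraints $\pi_{uv}$ (equivalently summing the shifts $c_{uv}$ modulo $k$) around the cycle does not yield the identity. The first observation is that if $F \subseteq E$ is a set of edges meeting every inconsistent cycle, then $G - F$ has a consistent labeling: in each connected component of $G - F$, fix a root, pick an arbitrary label for it, and propagate the constraints along a spanning tree; consistency of all cycles guarantees this is well defined, and then every edge of $G - F$ is satisfied. Conversely, any assignment violating $\ee|E|$ constraints yields such an $F$ of size $\ee|E|$ (the violated edges). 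So the task becomes: find a near-minimum edge set hitting all inconsistent cycles, given that the optimum is $\ee|E|$.

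Next I would write an LP relaxation for this edge-deletion problem: assign a variable $z_e \in [0,1]$ to each edge and require $\sum_{e \in C} z_e \ge 1$ for every inconsistent cycle $C$, minimizing $\sum_e z_e$. This LP has exponentially many constraints but can be solved because separation reduces to a shortest-path computation in the label-extended graph (check whether some vertex-copy $(v,i)$ is within fractional distance $<1$, under lengths $z_e$, of a different copy $(v,j)$). The LP value is at most $\ee |E|$. Now interpret $z$ as a metric: it induces a length on edges, hence a shortest-path pseudometric $d_z$ on $V$, and the inconsistent-cycle constraints say precisely that around any inconsistent cycle the total $z$-length is at least $1$ — so in particular the two endpoints of any edge $uv$ such that the constraint forces a "conflict" are far apart in an appropriate sense in the label-extended graph. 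The key point is that I want to run the low diameter decomposition directly on $G$ with the metric $d_z$, not on the label-extended graph.

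Here is the heart of the argument. Apply the low diameter decomposition theorem for $K_r$-minor free graphs (respectively genus-$g$ graphs) with metric $d_z$ and diameter parameter $\Delta$ chosen to be a small constant like $1/4$: this partitions $V$ into clusters each of $d_z$-diameter $< \Delta$, cutting each edge $e = uv$ with probability $O(r \cdot d_z(u,v)/\Delta)$ (respectively $O(\log g \cdot d_z(u,v)/\Delta)$). Delete all cut edges; their expected total number is $O(r/\Delta) \sum_e z_e \le O(r/\Delta) \cdot \ee |E| = O(r\ee)|E|$. It remains to show that the surviving graph $G - F$ has no inconsistent cycle — equivalently, each cluster, with the constraints restricted to it, is consistent. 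Suppose a cluster contained an inconsistent cycle $C$; then $\sum_{e \in C} z_e \ge 1$. But $C$ lies inside a single cluster of $d_z$-diameter $< \Delta = 1/4$, and I would argue that the $z$-length of any cycle contained in a set of small $d_z$-diameter must be small — a cycle through vertices all pairwise within $d_z$-distance $<\Delta$ can be covered by a few shortest paths of total length $O(\Delta) < 1$, contradicting the cycle constraint. (The cleanest version: an inconsistent cycle $C$ contains two adjacent vertices $u,v$ whose conflict persists, and one shows the $z$-distance between the relevant label-copies is $\ge 1$, which cannot happen inside a small-diameter cluster.) Making this last implication airtight — the precise statement that "small $d_z$-diameter cluster $\Rightarrow$ consistent" with the right constant, handling the translation between the metric on $G$ and distances in the label-extended graph — is the step I expect to be the main obstacle; everything else is standard LP rounding plus the black-box decomposition theorem. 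Finally, derandomization via the standard conditional-expectation / method-of-conditional-probabilities argument for low diameter decompositions gives the deterministic polynomial-time algorithm.
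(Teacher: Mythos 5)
Your overall architecture matches the paper's: reduce $\M2Lin_k$ to hitting all inconsistent cycles (the paper's Lemma~\ref{fac:inconsistent}), write the cycle-covering LP with a shortest-path separation oracle on the label-extended graph, and then run the low diameter decomposition \emph{directly on the constraint graph} with the LP lengths. However, the step you yourself flag as the main obstacle is a genuine gap, and the argument you sketch for it is false. You claim that a cycle contained in a cluster of small $d_z$-diameter must have small total $z$-length because it ``can be covered by a few shortest paths of total length $O(\Delta)$.'' This is not true: a cycle on many vertices cannot be covered by a bounded number of shortest paths, and more to the point, the shortest-path metric can collapse entirely while inconsistent cycles still carry total weight $1$. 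The paper's Figure~\ref{fig:useless} is exactly such a counterexample --- an integral feasible solution in which $d_x(u,v)=0$ for \emph{every} pair of vertices, so every $\Delta$-bounded decomposition puts the whole graph in one cluster, yet inconsistent (odd) cycles remain. Your alternative parenthetical fix (``show the $z$-distance between the relevant label-copies is $\ge 1$'') retreats to the label-extended graph, which is precisely the move the paper argues cannot work, since the label-extended graph of a planar graph can contain a $K_{\Omega(n)}$ minor.

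What is missing is two things. First, a preprocessing step: delete every edge with $z_e \ge 1/2$ outright; these cost at most $2\,\LP^\star$, and after their removal all remaining edges are light. Second, the nontrivial structural lemma (the paper's Lemma~\ref{lem:struc} and its generalization Lemma~\ref{lem:strucGen}): if every edge has weight less than $\delta$ and every inconsistent cycle has total weight at least $1$, then every inconsistent cycle contains a pair of vertices at $d_z$-distance greater than $(1-\delta)/2$. The proof is an inductive shortcutting argument --- repeatedly split the inconsistent closed walk with a shortest path to an interior vertex, observe that exactly one of the two resulting walks must be inconsistent (this uses the group structure of the $\2Lin$ constraints), and recurse until the cycle segment is a single edge, which must then be heavy. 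Only with this lemma in hand does the choice $\Delta = 1/4$ guarantee that no cluster of weak diameter $\Delta$ contains an inconsistent cycle. Everything else in your proposal (the reduction, the LP, the separation oracle, the $O(r/\Delta)\cdot\LP^\star$ accounting for cut edges) is correct and coincides with the paper.
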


For {\sf Max-2Lin}, Theorem~\ref{thm:g2linalg} on bounded genus graphs is a refinement of the $\OO(\log n \cdot \ee)$ bound of Gupta and Talwar~\cite{GTUGC} as $g = \OO(n)$.
Theorem~\ref{thm:g2linalg} also implies an improved approximation algorithm for the {\sf Min-Uncut} problem (the complement of the {\sf Max-Cut} problem), where the objective is to delete a minimum subset of edges so that the resulting graph is bipartite.

\begin{corollary} \label{cor:minuncut}
There is an LP-based polynomial time $O(r)$-approximation algorithm (respectively a $O(\log g)$-approximation algorithm) for the {\sf Min-Uncut} problem for $K_r$-minor free graphs (respectively for graphs of genus $g$).
\end{corollary}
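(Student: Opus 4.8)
The plan is to observe that the {\sf Min-Uncut} problem is exactly the $\Mi2Lin_2$ problem on the constraint graph $G$ where every edge $uv$ carries the constraint $x_u - x_v \equiv 1 \pmod 2$: an assignment $x : V \to \{0,1\}$ corresponds to a bipartition, and the edges whose constraint is violated (i.e.\ $x_u = x_v$) are exactly the edges one must delete to make $G$ bipartite. Thus a subset $F \subseteq E$ makes $G - F$ bipartite if and only if there is an assignment satisfying all constraints on $E \setminus F$, so the optimal number of edges to delete equals $\UNSAT(\Ins) \cdot |E|$. Let $\ee := \UNSAT(\Ins)$ be this optimal fraction; we do not know $\ee$, but this is not needed — the algorithm of Theorem~\ref{thm:g2linalg} takes only $\Ins$ as input.

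Next I would simply run the algorithm of Theorem~\ref{thm:g2linalg} on this $\M2Lin_2$ instance $\Ins$. Since $G$ is $K_r$-minor free, the theorem guarantees in polynomial time an assignment violating at most an $\OO(r \cdot \ee)$ fraction of constraints, i.e.\ at most $\OO(r) \cdot \ee |E| = \OO(r) \cdot \mathrm{OPT}$ edges, where $\mathrm{OPT}$ is the minimum number of edges whose removal makes $G$ bipartite. The set of violated edges is precisely a set whose deletion yields a bipartite graph, so we output it; this is an $\OO(r)$-approximation. The bounded-genus case is identical, using the $\OO(\log g \cdot \ee)$ guarantee of the theorem in place of the $\OO(r \cdot \ee)$ one, giving an $\OO(\log g)$-approximation.

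There is essentially no obstacle here: the only points to be careful about are (i) that $\M2Lin_2$ with all constraints ``$x_u - x_v \equiv 1$'' genuinely encodes {\sf Min-Uncut} (equivalently, odd cycles of $G$ are the ``inconsistent cycles'' of this instance, matching the LP-rounding viewpoint described in the abstract), and (ii) that the approximation ratio in Theorem~\ref{thm:g2linalg} is stated as a bound of the form $\OO(r) \cdot \ee$ on the \emph{fraction} of violated constraints, which translates into an $\OO(r)$ multiplicative factor on the integral optimum $\mathrm{OPT} = \ee |E|$ — so no dependence on the alphabet size or on $n$ survives, which is exactly the improvement over the $\OO(\log n)$-approximation implicit in Gupta–Talwar~\cite{GTUGC}. \qed
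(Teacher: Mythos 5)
Your proposal is correct and matches the paper's intent: the corollary is stated precisely as a consequence of Theorem~\ref{thm:g2linalg}, and your reduction (put the constraint $x_u - x_v \equiv 1 \pmod 2$ on every edge, so that inconsistent cycles are exactly the odd cycles and the violated edges of the output assignment form a valid uncut set of size $O(r)\cdot\mathrm{OPT}$, with $\mathrm{OPT}=\ee|E|$) is the right specialization. The only expositional difference is that the paper's designated proof in Section~\ref{sec:cor} runs the argument directly on the odd-cycle-transversal LP and later generalizes it to prove Theorem~\ref{thm:g2linalg}, whereas you invoke the theorem as a black box --- logically equivalent and not circular, since the theorem's proof does not depend on the corollary.
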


The best known approximation algorithm for {\sf Min-Uncut} is an
SDP-based $\OO(\sqrt{\log n})$-approximation algorithm~\cite{UncutSDP,CMM}.
We are not aware of any improvement of this bound for $K_r$-minor free
graphs and bounded genus graphs.  The above algorithms crucially used
the symmetry of the linear constraints in {\sf Max-2Lin}.  For general
Unique Games, we present a different algorithm with weaker guarantees.
The following theorem on bounded genus graphs is a refinement of the $\OO(\sqrt{\ee \cdot \log n})$ bound of Trevisan~\cite{TrevUGC} (see the discussion in~\cite[Section 4]{GTUGC}).

\begin{theorem}\label{thm:ugkalg}
Given a $\UG_k$ instance $\Ins = (G, \Pi)$ where $G$ is a $K_r$-minor free graph and $\UNSAT(\Ins) = \ee$ (respectively where $G$ is of genus at most $g$),
there is an LP-based polynomial time algorithm which outputs an assignment that violates at most an $\OO(r \cdot \sqrt{\ee})$ fraction of constraints (respectively at most a $\OO(\sqrt{\log g \cdot \ee})$ fraction of constraints).
\end{theorem}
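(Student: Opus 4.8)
The plan is to reduce general Unique Games to a cut-like problem on the constraint graph, just as in Theorem~\ref{thm:g2linalg}, but accounting for the fact that arbitrary permutations lack the group symmetry that made the $\M2Lin_k$ analysis lossless. First I would write down the natural LP relaxation whose variables assign to each edge $uv$ a ``distance'' $d_{uv} \in [0,1]$ measuring how badly a consistent labelling would have to be violated on $uv$; consistency around a cycle forces a triangle-type / cycle inequality, and the LP asks to minimize $\sum_{uv} w_{uv} d_{uv}$ subject to $\sum_{uv \in C} d_{uv} \ge 1$ for every inconsistent cycle $C$ (a cycle whose composed permutation is not the identity), together with $0 \le d_{uv} \le 1$. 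Since $\UNSAT(\Ins) = \ee$, deleting the $\ee |E|$ violated edges of an optimal assignment destroys every inconsistent cycle, so the LP optimum is at most $\ee$ (after normalizing weights to sum to $1$). This step is routine; the only subtlety is separating over inconsistent cycles, which can be done by a shortest-path computation in the label-extended graph, so the LP is solvable in polynomial time.

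Next I would pass to the shortest-path metric $\delta$ on $V$ induced by the LP solution $\{d_{uv}\}$ and apply the low-diameter decomposition guaranteed for $K_r$-minor free graphs (respectively genus-$g$ graphs): with diameter parameter $\Delta$, there is a randomized partition of $V$ into clusters of $\delta$-diameter at most $\Delta$ such that each edge $uv$ is cut with probability $O(r \cdot d_{uv}/\Delta)$ (respectively $O(\log g \cdot d_{uv}/\Delta)$). Inside each cluster every pair of vertices is within $\delta$-distance $\Delta$, and by the cycle inequalities a cluster of diameter below $1$ contains no inconsistent cycle; hence on each cluster the permutation constraints are globally consistent and one can propagate a consistent labelling from an arbitrary root, satisfying every intra-cluster edge. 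Vertices in different clusters, and edges whose endpoints lie in the same cluster but which were nevertheless cut, are simply assigned arbitrarily (or greedily). The expected fraction of constraints violated is then at most the expected fraction of cut edges, namely $O(r/\Delta) \cdot \sum_{uv} w_{uv} d_{uv} \le O(r \cdot \ee / \Delta)$, plus a contribution of order $\Delta$ coming from having to fix the diameter small enough that ``diameter $< 1$'' is meaningful after rescaling — it is here that, unlike the symmetric case, we cannot take $\Delta$ arbitrarily close to $1$ for free, because a cluster of diameter close to $1$ may still contain edges $uv$ with $d_{uv}$ bounded away from $0$ that the propagated labelling violates.

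The main obstacle, and the source of the $\sqrt{\ee}$ rather than $\ee$ loss, is exactly this trade-off: the propagation argument only gives a perfectly consistent cluster when the cluster diameter is strictly below $1$, but to make the fraction of cut edges small we want $\Delta$ large, while to make the within-cluster error small we want $\Delta$ small. I would balance by scaling the LP metric by a factor $\theta \in (0,1)$, running the decomposition with diameter $\Delta = \Theta(1)$ on the scaled metric (so clusters have true diameter $\Theta(\theta) < 1$), which makes each cluster consistent and propagatable, while the cut probability becomes $O(r \cdot \theta^{-1} d_{uv})$; summing gives $O(r \ee/\theta)$ violated edges from cuts, and a separate charging argument bounds the edges that survive inside a cluster but are still ``long'' (those with $d_{uv} = \Omega(\theta)$) by $O(\ee/\theta)$ as well — wait, more carefully, one must instead threshold and account for the mass of edges with $d_{uv} \ge \theta$, which is at most $\ee/\theta$, so the total error is $O(r(\ee/\theta + \theta))$; optimizing $\theta = \sqrt{\ee}$ yields the claimed $O(r\sqrt{\ee})$ bound, and $\theta = \sqrt{\ee/\log g}$ yields $O(\sqrt{\ee \log g})$ in the bounded-genus case. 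The routine pieces are the LP setup and the standard properties of the decomposition cited earlier; the one genuinely delicate point is the charging of the non-cut but ``long'' intra-cluster edges, which forces the square-root loss and distinguishes this theorem from Theorem~\ref{thm:g2linalg}.
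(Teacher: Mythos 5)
Your proposal has a genuine gap at its core: the reduction of general Unique Games to an inconsistent-cycle-transversal problem on the constraint graph does not work. You define an inconsistent cycle as one whose composed permutation is not the identity, and you claim that deleting the $\ee|E|$ edges violated by an optimal assignment destroys every such cycle, so that $\LP^\star \le \ee$. This is false for general permutations: if the composition $\pi$ around a cycle is not the identity but has a fixed point $i$, then assigning $i$ to the starting vertex and propagating satisfies every edge of the cycle, so this ``inconsistent'' cycle survives intact after deleting the violated edges, and the integral solution induced by the optimal assignment is infeasible for your LP. (If you instead define inconsistency as ``the composition has no fixed point,'' the relaxation direction is restored but the rounding direction breaks: a cluster free of such cycles can contain cycles whose compositions are distinct non-identity permutations with incompatible fixed points, so propagation from a root need not satisfy all intra-cluster edges.) This is exactly the obstruction that confines the cycle-cutting argument to $\M2Lin_k$, where the shift-symmetry $\pi_{uv}(i+c)=\pi_{uv}(i)+c$ forces a non-identity composition to be fixed-point free; the paper states explicitly that for general Unique Games it could not reduce to a cycle-cutting problem in the original constraint graph. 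Your subsequent ``charging of long intra-cluster edges'' does not repair this, because the problem is not that some surviving edges are long, but that a cluster of small fractional diameter need not be consistent at all.

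The paper's actual route keeps the low-diameter decomposition on the constraint graph but abandons the cycle LP: it uses the Gupta--Talwar LP on the label-extended graph, aggregates $d_G(u,v)=\sum_{l}d(u,v,l)$ as edge weights, applies a \emph{strong}-diameter $\OO(r^2)$-separating partition (strong diameter is needed so that a shortest-path tree inside each cluster uses only graph edges), and runs Gupta--Talwar correlated-sampling propagation along that tree; their Lemma~3.1 bounds each intra-cluster edge's violation probability by $d_G(u,v)+2d_T(u,v)\le \OO(\Delta)$. Balancing $\OO(r^2/\Delta)\LP^\star$ against $\OO(\Delta)\sum_e c_e$ gives the $\OO(r\sqrt{\ee})$ bound. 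Your balancing instinct and the use of the decomposition are in the right spirit, but the key lemma you would need (consistency of small-diameter clusters under your LP) is unavailable, so the argument as written does not go through.
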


The main tool in our algorithms is the low diameter graph
decomposition for $K_r$-minor free graphs and bounded genus graphs
(see Section~\ref{sec:decomp}).  Both of our algorithms are LP-based.
The $\M2Lin_k$ algorithm is based on cutting inconsistent cycles,
which is different from most existing algorithms for Unique Games that
are based on finding good assignments.  The $\UG_k$ algorithm is based
on the propagation rounding method in Gupta and Talwar~\cite{GTUGC}.
We defer the technical overviews to Section~\ref{sec:overview1} and
Section~\ref{sec:overview2}, after the preliminaries are defined.

\subsection{Related Work}

There are polynomial time approximation schemes for many problems in
$K_r$-minor free graphs (see~\cite{DHK,BiDim}). For example, there is
a $(1-\ee)$-approximation algorithm for {\sf Max-Cut} with running
time $2^{1/\ee} \cdot n^{\OO_r(1)}$ for $K_r$-minor free graphs. The
approach is a generalization of Baker's approach for planar graphs~\cite{Baker}
: removing a small fraction of edges so that the remaining graph is of
bounded treewidth, and then using dynamic programming to solve the
problem on each bounded treewidth component.  This approach can be
used to distinguish the two cases in the Unique Games Conjecture for
$K_r$-minor free graphs for any fixed $r$.  However, this approach is
not applicable to obtain multiplicative approximation algorithms for
minimizing the number of unsatisfied constraints for Unique Games,
since it requires to remove a constant fraction of edges while the
optimal value could be very small.  As mentioned previously, we are
not aware of any polynomial time approximation algorithms with
performance ratio better than $\OO(\sqrt{\log n})$ for the {\sf
  Min-Uncut} problem for $K_r$-minor free graphs.

The low diameter graph decomposition technique is very useful in
designing approximation algorithms for $K_r$-minor free graphs.  It
was first developed by Bartal in \cite{Bartal} for probabilistically
approximating general metric spaces by tree metrics. The special case
of planar and minor-free graphs was studied by Klein, Plotkin and
Rao~\cite{KPRDecomp} to establish the multicommodity flow-cut gap of
$K_r$-minor free graphs, and since then this technique has found
numerous applications.  A recent result using this technique is a
$(\OO_{\ee}(r^2),1+\ee)$ bicriteria approximation algorithm~\cite{BFK}
for the small set expansion problem, which is shown to be closely
related to the Unique Games problem~\cite{RSSSE,RSTSSE}.

It is a well-known result of Hadlock~\cite{Hadlock} that the maximum
cut problem can be solved exactly in polynomial time on planar graphs.
In Agarwal's thesis~\cite{AgarwalUGC}, he showed that an SDP
relaxation (with triangle inequalities) for $\UG_2$ is exact for
planar graphs, using a multicommodity flow-cut type argument
introduced in Agarwal et al.~\cite{hypercube}.  It is mentioned
in~\cite{AgarwalUGC} that this approach of bounding the integrality gap
(even approximately) is only known to work for $K_5$-minor free graphs.

Steurer and Vishnoi~\cite{SteuVish} showed that the Unique
Games problem can be reduced to the {\sf Multicut} problem and used it
to recover Gupta and Talwar's result in the case of $\M2Lin_k$.  The
approach of Steurer and Vishnoi is similar to ours; see
Section~\ref{sec:overview1} for some discussion.

\subsection{Organization}

In Section~\ref{sec:decomp}, we describe the low diameter graph
decomposition results that we will apply.  In
Section~\ref{sec:minuncut}, we first present the proof for the {\sf
  Min-Uncut} problem, as it is simpler and illustrates all the main
ideas.  Then we generalize the proof to the $\M2Lin_k$ problem
in Section~\ref{sec:max2lin}.  In Section~\ref{sec:genug}, we show the
result for general Unique Games.  The proof overviews for
Theorem~\ref{thm:g2linalg} and Theorem~\ref{thm:ugkalg} will be
presented in the corresponding sections, Section~\ref{sec:overview1}
and Section~\ref{sec:overview2}, after the preliminaries are defined.

\section{Low Diameter Graph Decompositions}\label{sec:decomp}

Let $G = (V, E)$ be a graph with non-negative edge weights $w : E \to \RR_+$.
A collection $P = \set{C_1, \ldots, C_k}$ of disjoint subsets $C_j \subseteq V$ (called clusters) is a partition if they satisfy $V = \cup_{i = 1}^k C_j$.
We call a partition $P$ {weakly $\Delta$-bounded}
if each of the clusters has weak diameter $\Delta$, i.e.
$$d_G(u, v) \le \Delta \quad\forall u, v \in C_j; \forall j \in [k]$$
where $d_G$ denotes the shortest path distance on $G$ (induced by the
edge weights $w$).  We say that the partition $P$ is
{strongly $\Delta$-bounded} if each cluster has strong diameter
$\Delta$, i.e.
$$d_{G[C_j]}(u, v) \le \Delta \quad \forall u, v \in C_j; \forall j \in [k]$$
where $d_{G[C_j]}$ denotes the shortest path distance in the induced subgraph $G[C_j]$.
%
%
We write $P(u)$ for the unique cluster $C_j$ containing the vertex $u
\in V$. We call a distribution $\Par$ of partitions {$\Delta$-bounded
$D$-separating} if each cluster is of diameter $\Delta$ and for each
edge $uv \in E$ we have
\begin{equation} \label{e:cut}
  \Pp_{P \sim \Par}[P(u) \ne P(v)] \le \frac{D}{\Delta} \cdot w(u,v).
\end{equation}%
This implies that we can cut a graph into clusters with diameter at most $\Delta$ by deleting all the inter-cluster edges, while only losing a $D/\Delta$ fraction of the total edge weight.
%
%
We call a $\Delta$-bounded $D$-separating partitioning scheme efficient,
if we can sample it in polynomial time.

The seminal work of Klein, Plotkin and Rao~\cite{KPRDecomp} showed the
first low diameter graph decomposition scheme for planar graphs and
more generally for $K_r$-minor free graphs.  We will use the latest
result of this line of work~\cite{KPRDecomp,FaTa,LeeSid,CopsRobbers},
as it gives the best known quantitative bound and also it guarantees
the clusters have strong diameter $\Delta$ which will be important in
our algorithm for general Unique Games.

\begin{theorem}[\cite{CopsRobbers}]\label{thm:copsrobbers}
  Every weighted $K_r$-minor free graph admits an efficient weakly
  $\Delta$-bounded $\OO(r)$-separating partitioning scheme for any
  $\Delta \geq 0$.
\end{theorem}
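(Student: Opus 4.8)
The plan is to establish the theorem by the recursive ``chopping'' technique of Klein--Plotkin--Rao and its refinements, with the number of recursion levels controlled by $K_r$-minor-freeness through the cops-and-robbers game underlying \cite{CopsRobbers}. Fix the scale $\Delta$ throughout.

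First I would describe the sampling procedure as $t$ levels of recursion, maintaining at each level a partition of $V$ into ``active'' clusters, with all of $V$ active at level $1$. At each level, every active cluster $C$ is processed by choosing an arbitrary center $z \in C$, drawing an offset $\beta$ uniformly from $[0,\Delta)$, and partitioning $C$ into distance-bands of width $\Delta$ around $z$ whose boundaries are shifted by $\beta$ (that is, $v \in C$ goes to band $i$ iff $(i-1)\Delta \le d_G(z,v)+\beta < i\Delta$); each connected component of each band becomes an active cluster at the next level. After $t$ levels all clusters are declared final and the resulting partition is returned. Every step is a single-source shortest-path computation plus one random draw, and the recursion tree has polynomial size, so the scheme is efficient.

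Two quantitative facts then finish the argument. For the separating property, fix an edge $uv$ with weight $w = w(u,v)$: at any one level, the two endpoints lie in a common active cluster $C$ with center $z$, their distances $d_G(z,u)$ and $d_G(z,v)$ differ by at most $w$, and they are placed in different bands only if the random offset $\beta$ makes a band boundary fall into the length-$w$ interval between those two distances, which has probability at most $w/\Delta$; a union bound over the $t$ levels gives $\Pp_{P}[P(u)\ne P(v)] \le t\cdot w/\Delta$. For the diameter, every band produced at a level sits inside a $G$-ball of radius $\OO(\Delta)$, so a final cluster has weak diameter $\OO(t\Delta)$. With $t = \OO(r)$, together with a rescaling of $\Delta$, this already gives an efficient weakly $\Delta$-bounded scheme; pinning the separating parameter down to the claimed $\OO(r)$ — a crude rescaling of the above yields only a worse power of $r$ — is the quantitatively delicate step, and here I would invoke (or reproduce) the sharpened charging of \cite{CopsRobbers}, the point being that a fixed edge is genuinely ``at risk'' at only $\OO(1)$ of the $\OO(r)$ levels.

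The heart of the proof, and the step I expect to be the main obstacle, is the structural claim that $t = \OO(r)$ levels already drive every cluster down to a single vertex, so the recursion terminates and the union bound above runs over only $\OO(r)$ terms. This is exactly where $K_r$-minor-freeness enters: were some vertex to survive $r-1$ successive chops — lying, at each level, just across a band boundary from the center of its cluster — then the $r-1$ nested ``annuli'' recording those chops would each be connected, each adjacent to the next, and each adjacent to the side containing the surviving vertex, so contracting them would produce $r$ pairwise-adjacent connected pieces, i.e.\ a $K_r$ minor, contradicting the hypothesis. Formalizing this via the cops-and-robbers game on $G$ — each level corresponds to placing one cop, $r-1$ cops suffice on a $K_r$-minor-free graph, and the ``threatening skeleton'' of \cite{CopsRobbers} is precisely the structure that would otherwise let the robber evade capture forever — and making the connectivity and the chain of adjacencies precise so that the $K_r$ minor genuinely materializes, is the technical core of the argument.
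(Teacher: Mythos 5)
This theorem is not proved in the paper at all: it is quoted as a black box from Abraham, Gavoille, Gupta, Neiman and Talwar \cite{CopsRobbers} (the authors only remark that their phrasing is a corollary of the padded-decomposition statements there). So the only question is whether your sketch stands on its own, and it does not: two of its load-bearing claims are false. First, a band $\set{v \in C : (i-1)\Delta \le d_G(z,v)+\beta < i\Delta}$ does not ``sit inside a $G$-ball of radius $\OO(\Delta)$''; it is an annulus, and already in the unit-weight grid (planar, hence $K_5$-minor free) a single annulus around a center is a connected set whose diameter grows with its distance from the center, so it is not $\OO(\Delta)$. The weak-diameter bound for the final clusters is therefore exactly the part of the theorem that is \emph{not} automatic from the chopping procedure. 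Second, and relatedly, minor-freeness is not used to show that $\OO(r)$ levels reduce every cluster to a single vertex: that statement is false (in the grid the clusters after any constant number of levels at a fixed scale $\Delta$ still contain many vertices, and the process need never reach singletons), and it is not what is needed. The actual Klein--Plotkin--Rao argument runs the chop for exactly $r-1$ levels and then uses the would-be $K_r$ minor --- built from the nested cores of the $r-1$ levels together with the surviving cluster --- to bound the \emph{weak diameter} of the final clusters by $\poly(r)\cdot\Delta$. Your sketch has inverted which conclusion the minor construction supports, so the contradiction you describe does not arise from the (perfectly possible) event that a cluster fails to shrink to a point.

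Even after repairing this, the quantitative claim remains out of reach of the argument as given: with $t=r-1$ levels the final weak diameter is $\poly(r)\cdot\Delta$, and rescaling $\Delta$ to restore diameter $\Delta$ inflates the separating parameter to $\OO(r^2)$ or $\OO(r^3)$. Obtaining $\OO(r)$ is precisely the main contribution of \cite{CopsRobbers}; the ``sharpened charging'' you defer to is not a routine tightening of the union bound over levels but a genuinely different analysis (this is where the threatening-skeleton machinery actually does work). As written, the proposal establishes at best the classical KPR-type bound and does not prove the stated theorem.
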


\begin{theorem}[\cite{CopsRobbers}]\label{thm:crstrong}
  Every weighted $K_r$-minor free graph admits an efficient strongly $\Delta$-bounded $\OO(r^2)$-separating partitioning scheme for any $\Delta \geq 0$.
\end{theorem}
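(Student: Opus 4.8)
The plan is to prove Theorem~\ref{thm:crstrong} by adapting the iterative ball-carving (equivalently, ``cop decomposition'') scheme of Klein, Plotkin and Rao~\cite{KPRDecomp}, making sure that every piece we produce is itself a shortest-path ball of a \emph{current} induced subgraph so that strong diameter comes for free. Concretely, I would run $r-1$ rounds; at the start each connected component of $G$ is a cluster, and in each round I refine the current clustering by processing every current cluster $C$ independently: repeatedly pick a not-yet-removed vertex $c \in C$, draw a radius $\rho$ from a distribution supported on $[0, L]$ with $L = \Theta(\Delta / r)$, peel off the ball $\set{u : d_{G[C]}(c,u) \le \rho}$ as a new cluster, and continue on the remaining induced subgraph until $C$ is exhausted. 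Because a shortest-path ball $B$ in a graph $H$ contains, for each of its vertices, an entire shortest path back to the center, $H[B]$ is connected and has strong diameter at most $2\rho \le 2L$; choosing the constant in $L$ so that $2L \le \Delta$ gives the required strong $\Delta$-boundedness of every output cluster. Sampling is polynomial time since there are at most $(r-1)n$ peeling steps, each a single-source shortest-path computation.

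For the separating bound, fix an edge $uv$ and consider the round and the cluster $C$ in which $u$ and $v$ first land in different pieces. Such a split can only happen when the ball around the chosen center $c$ contains exactly one of $u,v$ (two endpoints of an edge cannot end up in two different pieces that are merely ``left over'', since those induce disjoint subgraphs). Conditioned on $C$ and on which center $c$ is processed, this event forces $\rho$ into the interval between $d_{G[C]}(c,u)$ and $d_{G[C]}(c,v)$, whose length is at most $w(u,v)$ by the triangle inequality; a standard region-growing argument (handling the fact that several balls are peeled off the same cluster, as in~\cite{KPRDecomp}) then bounds the per-round splitting probability by $\OO(w(u,v)/L) = \OO(r \cdot w(u,v)/\Delta)$. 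Since a given pair can be split in only one round, summing over the $r-1$ rounds yields $\Pp_{P\sim\Par}[P(u)\neq P(v)] = \OO(r^2\cdot w(u,v)/\Delta)$, which is exactly the claimed $\OO(r^2)$-separation.

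The remaining, and in my view hardest, step is to show that $r-1$ rounds actually suffice --- i.e.\ that after round $r-1$ no cluster gets refined any further --- and this is where $K_r$-minor-freeness enters. If refinement continued into an $r$-th round, one would obtain a chain $C_0 \supsetneq C_1 \supsetneq \cdots \supsetneq C_{r-1}$ of nested pieces, $C_i$ a ball peeled off inside $C_{i-1}$; the goal is to read off $r$ pairwise-touching connected subgraphs from this chain (roughly, the $r-1$ peeled balls together with the final leftover region) and conclude that $G$ has a $K_r$ minor. The subtle point --- and the reason the naive argument fails --- is that consecutive peeled balls need not be adjacent, so one cannot use the balls themselves as the branch sets; the fix, which is precisely the content of the cops-and-robbers analysis of~\cite{CopsRobbers} (building on~\cite{KPRDecomp,FaTa,LeeSid}), is to grow each ball from a carefully chosen seed (reaching back along a geodesic toward the previously removed structure) so that the branch sets can be made to pairwise intersect without destroying the radius bound. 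I would also remark that the weak-diameter statement, Theorem~\ref{thm:copsrobbers}, obtains the sharper $\OO(r)$-separation by charging each edge globally to the \emph{first} round that threatens it and using a truncated geometric radius rather than a uniform one; this global accounting does not coexist with the requirement that every piece be a genuine induced-subgraph ball, which is why the strong-diameter version pays the extra factor of $r$.
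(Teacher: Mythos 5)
First, note that the paper does not prove Theorem~\ref{thm:crstrong}: it is imported verbatim from \cite{CopsRobbers} (the authors only remark that the padded-decomposition theorems there imply the form quoted here). So the question is whether your sketch stands on its own, and it does not. The genuine gap is in the separation bound, not where you placed the difficulty. For iterative ball carving with radii bounded by $L$, the per-round estimate $\Pp[\text{$uv$ split}] = \OO(w(u,v)/L)$ is false in general graphs: an edge can be ``threatened'' by many consecutively carved balls before it is finally captured or cut, and without structural assumptions the best one gets is $\OO(\log n \cdot w(u,v)/L)$ (this is the Bartal/CKR-type bound). The entire point of the cops-and-robbers analysis in \cite{CopsRobbers} is to use $K_r$-minor-freeness to bound, by a function of $r$, the number of previously carved ``skeletons'' whose balls can reach a given vertex; that is where the excluded minor enters the \emph{probability} calculation. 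By deferring this to ``a standard region-growing argument \dots as in \cite{KPRDecomp}'' you have hidden the whole theorem inside the one step you did not prove.

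Second, your round structure is internally inconsistent, which makes the argument hard to repair. As described, a single round already exhausts every cluster into balls of radius at most $L$, so every piece has strong diameter at most $2L$ after round one, and it is unclear what rounds $2,\dots,r-1$ accomplish or what ``no cluster gets refined any further'' could mean. You appear to be grafting the $r$-round annulus-chopping of \cite{KPRDecomp} (where minor-freeness bounds the \emph{diameter} after $r$ rounds, and a nested chain of cuts yields a $K_r$ minor) onto single-round ball carving (where the diameter is trivial and minor-freeness must instead control the \emph{cutting probability}). These are different schemes with the hardness in different places, and the proposed $K_r$-minor construction from ``a nested chain of peeled balls'' does not arise from your own procedure. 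The honest conclusion is the one the paper adopts: Theorem~\ref{thm:crstrong} is a black-box import of the strong-diameter decomposition of \cite{CopsRobbers}, whose proof is substantially more involved than a KPR-style iteration.
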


We will also use the optimal bounds for bounded genus graphs, to derive better results for Unique Games in these graphs.
\begin{theorem}[\cite{CopsRobbers,LeeSid}]\label{thm:cr2}
Every weighted graph of genus $g$ admits an efficient strongly $\Delta$-bounded $\OO(\log g)$-separating partitioning scheme for any $\Delta \geq 0$.
\end{theorem}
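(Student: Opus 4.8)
\emph{Proof idea.} A graph of genus $g$ excludes $K_{\OO(\sqrt g)}$ as a minor, so Theorem~\ref{thm:crstrong} already gives a strongly $\Delta$-bounded $\OO(g)$-separating scheme; the content of this theorem is the exponentially better dependence $\OO(\log g)$, and obtaining it must exploit the embedding rather than merely the excluded minor. The plan is to reduce the genus-$g$ case to the planar case, for which an $\OO(1)$-separating strongly $\Delta$-bounded scheme is available (the Klein--Plotkin--Rao scheme and its refinements), by peeling off genus in $\OO(\log g)$ stages.

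First I would establish a \emph{genus-reduction step}: if $G$ is cellularly embedded on a surface of Euler genus $h \ge 1$, then at the cost of deleting each edge $uv$ with probability $\OO(w(u,v)/\Delta)$ one can cut $G$ into pieces each of which is either planar or embeds on a surface of genus at most $h/2$, while increasing the strong diameter of each piece by only a constant factor. The mechanism is a randomly shifted BFS: grow a BFS from an arbitrary root, pick a shift $t$ uniformly in $[0,\Delta)$, and delete the edges crossing the BFS level sets at depths $\equiv t \pmod \Delta$. Each deleted level set is essentially a disjoint union of cycles, so the surviving pieces live in ``bands'' of BFS-width $\Delta$, and a topological accounting of how genus is distributed among these bands shows that a band is either planar or carries handles that can be severed by a non-contractible cycle of length $\OO(\Delta)$ passing through it, whose deletion is charged to the same random event. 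Iterating this step $\OO(\log g)$ times renders every surviving piece planar, and the separation probabilities telescope to $\OO(\log g \cdot w(u,v)/\Delta)$. Then one applies the planar scheme to each planar piece (with the induced weights and the induced graph metric, which by construction is within a constant of $d_G$ restricted to the piece), losing a further $\OO(1)$ factor; composing the two stages gives the claimed $\OO(\log g)$-separating strongly $\Delta$-bounded scheme, and every ingredient---BFS, sampling a shift, a shortest non-contractible cycle, the polynomial-time planar scheme---runs in polynomial time.

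The main obstacle is the topological ``genus accounting'' for the bands: quantifying how the genus of a surface-embedded graph splits when one cuts along BFS level sets, and controlling the length of the non-contractible cycles needed to planarize a band so that the charging goes through, all while keeping every piece \emph{strongly} (not just weakly) $\Delta$-bounded across the recursion---this last point is precisely what separates the strong version from the weak one (compare Theorems~\ref{thm:copsrobbers} and~\ref{thm:crstrong}). An alternative route avoiding the explicit band analysis is to verify, within the cops-and-robbers framework of~\cite{CopsRobbers}, that genus-$g$ graphs admit a winning strategy with $\OO(\log g)$ cops, from which the padded (i.e.\ separating) decomposition follows by their general machinery.
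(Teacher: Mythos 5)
First, a point of calibration: the paper does not prove Theorem~\ref{thm:cr2} at all --- it is imported as a black box from \cite{CopsRobbers,LeeSid}, with only the remark that the stated form is an easy corollary of the padded-decomposition theorems there. So the question is whether your sketch would stand as a self-contained proof, and it would not. Your overall strategy is the right one and matches the literature (reduce the genus-$g$ case to the planar case of \cite{KPRDecomp} in $\OO(\log g)$ stages, or equivalently compose a stochastic planarization in the spirit of \cite{LeeSid} with a planar scheme), and your opening observation that Theorem~\ref{thm:crstrong} only yields $\OO(g)$ here (since genus $g$ excludes $K_r$ for $r=\Theta(\sqrt{g})$) is correct. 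But the entire content of the theorem is concentrated in the ``genus-reduction step,'' which you assert rather than prove and which you yourself flag as the main obstacle. Concretely: (i) the claim that cutting along randomly shifted BFS level sets leaves bands that are ``either planar or carry handles severable by a non-contractible cycle of length $\OO(\Delta)$'' is itself a theorem-sized topological assertion --- level sets of a BFS in a graph that is merely embedded on a surface need not bound nicely, a single band can retain essentially all of the genus, and the charging of the severing cycles to the same random event is not set up; (ii) the claim that the separation probabilities ``telescope'' over $\OO(\log g)$ stages requires that the induced shortest-path metric of each surviving piece stays within a constant of $d_G$ restricted to that piece \emph{after every stage}, which is exactly the strong-diameter difficulty you name but do not resolve; and (iii) the final composition needs the planar pieces to be strongly $\Delta$-bounded in their induced metrics, which does not follow from weak boundedness of the bands. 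The known proofs handle these points with a genuinely nontrivial cut-graph construction and randomization over scales, not with BFS bands alone.

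Your alternative suggestion --- verifying within the framework of \cite{CopsRobbers} that genus-$g$ graphs admit the relevant structure with parameter $\OO(\log g)$, and invoking their general machinery --- is closer to how the cited works actually obtain the strong-diameter bound, but as written it is a pointer to the literature rather than an argument. For the purposes of this paper that is exactly what the authors do (cite and use), so your proposal is not ``wrong''; but judged as a blind proof attempt it has a genuine gap at its core lemma, and the honest conclusion of your own sketch is that the theorem remains to be proved.
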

The results in~\cite{CopsRobbers} are stated using the language of
padded decompositions, but it is easy to see that the results we stated are corollaries of the theorems in~\cite{CopsRobbers}.

\section{Minimum Uncut} \label{sec:minuncut}

Given an undirected graph $G=(V,E)$ with a non-negative cost $c_e$ for
each edge $e \in E$, the {\sf Min-Uncut} problem is to find a subset
$S \subseteq V$ to minimize the total cost of the uncut edges (the
edges with both endpoints in $S$ or both endpoints in $V-S$).
Alternatively, the problem is equivalent to finding a subset
$F \subseteq E$ of minimum total cost so that $G-F$ is a bipartite
graph (so $F$ is the uncut edges).  As a graph is bipartite if and
only if it has no odd cycles, the problem is equivalent to finding a
subset of edges of minimum total cost that intersects all the odd
cycles in the graph, which is also known as the {\sf Odd Cycle
  Transversal} problem.  We will tackle the {\sf Min-Uncut} problem
using this perspective,
  by writing a linear program for the {\sf Odd Cycle
  Transversal} problem.

As mentioned already, the {\sf Min-Uncut} problem is a special case of
$\M2Lin_2$. We will see in Section~\ref{sec:max2lin} that the ideas in
this section can be readily generalized to design an approximation
algorithm for the $\M2Lin_k$ problem.

\subsection{Linear Programming Relaxation}\label{sec:LP}

We consider the following well-known linear programming relaxation for
the {\sf Odd Cycle Transversal} problem, which
is known to be exact when the
input is a planar graph~\cite{MinUncutPlanarLP}.
We note that this is similar to the LP formulation used by Gupta and Talwar~\cite{GTUGC}
when specialized to the {\sf Min-Uncut} problem,
but their LP formulation is on the ``label extended graph'' that we will explain soon.

\noindent
\begin{boxedminipage}{\textwidth}
  {\begin{flalign*}\label{lp:min-uncut}
    \LP^\star = \min \sum_{e \in E} c_e x_e \tag{LP-MinUncut} &&
  \end{flalign*}%
  subject to%
  \begin{flalign*}
    \sum_{e \in C} x_e \ge 1 & \quad C \in \Cc & \\
    x_e \ge  0 & \quad e \in E &
  \end{flalign*}%
  where $\Cc$ is the set of odd cycles of $G$.}
\end{boxedminipage}\newline
%

This LP has exponentially many
constraints.  To solve it in polynomial time using the ellipsoid
method~\cite{ellipsoidref}, we require a polynomial time separation
oracle to check whether a solution $x$ is feasible or not, and if not
provide a violating constraint.  For this LP, it is well known that
the separation oracle can be implemented in polynomial time using
shortest path computations (e.g. see~\cite{GTUGC}).  Since this will
be relevant to our discussion, we describe the separation oracle in
the following.

The idea is to construct the ``label extended graph'' $H=(V',E')$ of
$G=(V,E)$ (to use the Unique Games terminology).
For each vertex $v$ in $V$, we create two vertices $v^+$ and $v^-$ in
$V'$.  For each edge $uv$ in $E$, we add two edges $u^+ v^-$ and $u^-
v^+$ in $E'$, and we set the weight of $u^+ v^-$ and $u^- v^+$ to be
$x_{uv}$. By construction, there is an odd cycle in $G$ containing $v$
if and only if there is a path from $v^+$ to $v^-$ in the label extended
graph $H$.  So, to check that $x$ is feasible, we just need to check
that the weight of the shortest path from $v^+$ to $v^-$ is at least one
for every $v$.

\subsection{Proof Overview} \label{sec:overview1}

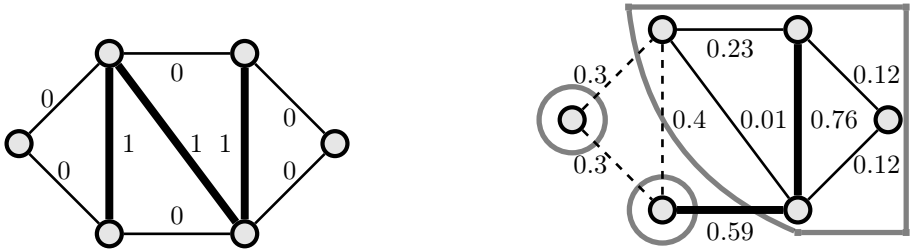
\begin{figure} 
  \centering
  \begin{subfigure}[t]{0.4\textwidth}
    \tikzstyle{normal edge} = [draw, line width=1pt, -, black]
    \tikzstyle{heavy edge} = [draw, line width=3pt, -, black]
    \begin{tikzpicture}[scale=0.6]
      \begin{scope}[every node/.style={draw, circle, fill=gray!20, ultra thick}]
        \node (a) at (0, 0) {};
        \node (b) at (2,2) {};
        \node (c) at (5,2) {};
        \node (d) at (2, -2) {};
        \node (e) at (5, -2) {};
        \node (f) at (7, 0) {};
      \end{scope}

      \begin{scope}
        \path[normal edge] (a) edge node[left] {0} (b)
        edge node[above] {0} (d);
        \path[normal edge] (b) edge node[below] {0} (c);
        \path[heavy edge]  (b) edge node[right] {1} (d);
        \path[normal edge] (d) edge node[above] {0} (e);
        \path[normal edge] (c) edge node[below] {0} (f);
        \path[heavy edge] (e) edge node[left] {1} (c);
        \path [heavy edge] (b) edge node[right] {1} (e);
        \path[normal edge] (e) edge node[above] {0}  (f);
      \end{scope}
    \end{tikzpicture}
    \caption{The shortest path distance between any two pairs of vertices is 0. The bold edges correspond to an optimal integral solution to \ref{lp:min-uncut}. }\label{fig:useless}
  \end{subfigure}\hspace{1em}
  \begin{subfigure}[t]{0.4\textwidth}
    \tikzstyle{normal edge} = [draw, line width=1pt, -, black]
    \tikzstyle{heavy} = [draw, line width=3pt, -, black]
    \tikzstyle{ic edge} = [draw, line width=1pt, -, black, dashed]
    \tikzstyle{invisible} = [fill=black!50, rectangle, inner sep=0.0000001pt, draw=black!50]
    \tikzstyle{separ} = [opacity=0.5, line width=2pt]
    \begin{tikzpicture}[scale=0.6]
      \begin{scope}[every node/.style={draw, circle, fill=gray!20, ultra thick}]
        \node (a) at (0, 0) {};
        \node (b) at (2,2) {};
        \node (c) at (5,2) {};
        \node (d) at (2, -2) {};
        \node (e) at (5, -2) {};
        \node (f) at (7, 0) {};
        \node[circle, fit=(a), fill opacity=0, draw=black!50, line width=2pt](aCluster) {};
        \node[circle, fit=(d), fill opacity=0,  draw=black!50, line width=2pt](aCluster) {};
        \node[invisible] (topLeft) at (1.25, 2.5) {};
        \node[invisible] (botLeft) at (5, -2.5) {};
        \node[invisible] (topRight) at (7.4, 2.5) {};
        \node[invisible] (botRight) at (7.4, -2.5) {};
      \end{scope}

    \begin{scope}
      \path[ic edge] (a) edge node[left, black](ab) {0.3} (b);
      \path[ic edge] (a) edge node[left, black, ](ad) {0.3} (d);
      \path[normal edge] (b) edge node[below](bc) {0.23} (c);
      \path[ic edge]  (b) edge node[right, black](bd) {0.4} (d);
      \path [normal edge] (b) edge node[right](be) {0.01} (e);
      \path[heavy] (d) edge node[below](de) {0.59} (e);
      \path[normal edge] (c) edge node[right](cf) {0.12} (f);
      \path[heavy] (e) edge node[right](ec) {0.76} (c);
      \path[normal edge] (e) edge node[right](ef) {0.12}  (f);
      \path[separ] (topLeft) edge (topRight) edge[bend right] (botLeft);
      \path[separ] (botLeft) edge (botRight);
      \path[separ] (botRight) edge (topRight);
    \end{scope}
  \end{tikzpicture}
  \caption{After removing edges with weight at least $1/2$ (the bold edges), all remaining subgraphs are of diameter at most $1/4$ and they are bipartite. The dashed edges are the inter-cluster edges.} \label{fig:useful}
  \end{subfigure}
\caption{Applying low diameter graph decomposition in a feasible solution to LP-MinUncut.}
\end{figure}

One natural approach to do the rounding is to consider the label extended graph $H$ of $G$.
From the above discussion,
destroying all the odd cycles in $G$ is equivalent to destroying all
the $v^+$-$v^-$ paths in $H$ for all $v$.
Since $x$ is feasible, we know that the shortest path distance between
$v^+$ and $v^-$ is at least $1$ for every $v$.  Therefore, we can
apply the low diameter graph decomposition result in the label
extended graph, by setting $\Delta<1$ to ensure that all $v^+$ and
$v^-$ are disconnected, and hope to delete edges with weight at most
$\sum_{e \in E} O(r/\Delta) \cdot c_e x_e = O(r) \cdot \LP^\star$ by
Theorem~\ref{thm:copsrobbers}.  This is similar to the approach used
in~\cite{SteuVish} to reduce Unique Games to {\sf Multicut}.  The
problem of this approach is that the label extended graph $H$ could
have arbitrarily large clique minor, even though the original
constraint graph $G$ is $K_r$-minor
free: 
even if the constraint graph $G$ is grid-like and planar, the label-extended graph
$H$ can contain a $K_{\Omega(n)}$ minor, even when the alphabet size
is just two. This means that applying the theorems in
Section~\ref{sec:decomp} blindly does not give better than a
$\OO(\log n)$-factor approximation.

This is often a technical issue in analyzing algorithms for Unique Games:
It is most natural to work on the label extended graph
but the label extended graph does not necessarily share the nice properties
in the original graph~\cite{Kolla}.
It is not obvious how to apply low diameter graph
decomposition directly in the original constraint graph to do the
rounding.  For example, in the graph shown in
Figure~\ref{fig:useless}, $x$ is an integral solution but the shortest
path distance (using $x_e$ as the edge weight of $e$) is $0$ for all
pairs of vertices, providing no useful information about which pairs of
vertices we need to separate.

Our main observation is that the shortest path distances are not
useful only when there are edges with large
$x_e$.  In Lemma~\ref{lem:struc}, we prove that if $x_e < 1/2$ for
every $e$, then every odd cycle contains a pair of vertices $u,v$ with
shortest path distance greater
than $1/4$ (using $x_e$ as the edge weight of $e$).
 Therefore, if we apply low diameter graph decomposition with
 $\Delta = 1/4$,
then we can ensure that no odd cycle will remain in any cluster,
and the above calculation shows that the
total weight of the deleted edges is $\OO(r) \cdot \LP^\star$.
To reduce to the case where there are no edges with $x_e \geq 1/2$, we
can simply delete all such edges as their total weight is at most $2
\LP^\star$.  This preprocessing step is remotely similar to some
iterative rounding algorithms~(see~\cite{IterativeRounding}).
See Figure~\ref{fig:useful} for an illustration.

\subsection{Rounding Algorithm} \label{sec:rounding1}

\begin{algorithm}{{\sf Min-Uncut}}{A feasible solution $x$ to \ref{lp:min-uncut} with value $\LP^\star$ on a $K_r$-minor free graph.}{An integral solution to \ref{lp:min-uncut} with total cost $O(r) \cdot \LP^\star$}
  \begin{enumerate}
  \item Let $F_1$ be the subset of edges with $x_e \geq 1/2$.
        Delete all edges in $F_1$ from the graph.
  \item Set the weight $w_e$ of each edge $e$ in the remaining graph
    to be $x_e$.\\ Sample a weakly $(1/4)$-bounded $\OO(r)$-separating
    partition $P$ guaranteed by Theorem~\ref{thm:copsrobbers} in the
    remaining graph.
  \item Let $F_2$ be the set of inter-cluster edges in $P$, i.e. edges $uv$ with $P(u) \neq P(v)$.\\
        Return $F_1 \cup F_2$ as the output.
  \end{enumerate}
\end{algorithm}

\subsection{Main Lemma}

The following lemma allows us to apply low diameter graph decomposition in the original constraint graph.
The proof uses the simple but crucial fact that if we ``shortcut'' an odd cycle, one of the two cycles created is an odd cycle.

\begin{lemma}\label{lem:struc}
Let $G'$ be a graph with edge weight $x_e$ for each edge $e$.
Suppose every odd cycle $C$ has total weight at least $1$,
i.e. $\sum_{e \in C} x_e \geq 1$.
If $0 \leq x_e < \delta \leq 1$ for every edge $e \in G'$,
then every odd cycle $C$ in $G'$ contains a pair of vertices $u, v$
satisfying $d_x(u, v) > (1 - \delta)/2$,
where $d_x(u,v)$ denotes the shortest path distance from $u$ to $v$ induced by the edge weights $x_e$.
\end{lemma}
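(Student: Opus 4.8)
The plan is to prove the statement directly: given an arbitrary odd cycle $C = v_0 v_1 \cdots v_{m-1} v_0$ of $G'$ (so $m$ is odd) meeting the hypotheses, I will exhibit a vertex $v_k$ of $C$ with $d_x(v_0,v_k) > (1-\delta)/2$. The key device is to split shortest-walk distances by parity: for $u,v \in V(G')$ let $d^+(u,v)$ (resp.\ $d^-(u,v)$) be the minimum $x$-weight of a walk from $u$ to $v$ using an even (resp.\ odd) number of edges. Since a minimum-weight walk can always be shortened to a path without increasing its weight, $d_x(u,v) = \min\{d^+(u,v),\, d^-(u,v)\}$; moreover both quantities are finite for vertices of $C$, because the two arcs of $C$ joining any two of its vertices have opposite length-parities.

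Three facts drive the proof. (i) $d^-(v_0,v_0) \ge 1$: a minimum-weight odd closed walk through $v_0$ exists (e.g.\ $C$ itself), and by repeatedly splitting it at a repeated vertex — this is the ``shortcut an odd cycle'' observation mentioned before the lemma — it contains an odd cycle of $G'$, which has $x$-weight at least $1$ by hypothesis while having weight at most that of the walk. (ii) $d^+(v_0,v_i) + d^-(v_0,v_i) \ge d^-(v_0,v_0) \ge 1$ for every $i$: concatenating a minimum even walk $v_0 \to v_i$ with a minimum odd walk $v_i \to v_0$ produces an odd closed walk through $v_0$. (iii) Appending the edge $v_i v_{i+1}$ flips the parity of a walk and changes its weight by $x_{v_i v_{i+1}} < \delta$; hence, setting $f_i := d^+(v_0,v_i)$ when $i$ is even and $f_i := d^-(v_0,v_i)$ when $i$ is odd, we get $|f_{i+1} - f_i| < \delta$ for all $i$.

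Now $f_0 = d^+(v_0,v_0) = 0$ while $f_m = d^-(v_0,v_0) \ge 1$ (since $m$ is odd and $v_m = v_0$), so $f_0, f_1, \ldots, f_m$ climbs from $0$ past $1$ in steps of size less than $\delta$. Let $k$ be the least index with $f_k > (1-\delta)/2$; then $f_{k-1} \le (1-\delta)/2$ gives $f_k < (1-\delta)/2 + \delta = (1+\delta)/2$, so $(1-\delta)/2 < f_k < (1+\delta)/2$. Since $f_k$ equals one of $d^+(v_0,v_k), d^-(v_0,v_k)$ and by (ii) their sum is at least $1$, the other one exceeds $1 - (1+\delta)/2 = (1-\delta)/2$; hence $d_x(v_0,v_k) = \min\{d^+(v_0,v_k), d^-(v_0,v_k)\} > (1-\delta)/2$. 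Finally $v_k \ne v_0$: the last step forces $f_{m-1} > f_m - \delta \ge 1-\delta$, and $1-\delta \ge (1-\delta)/2$ since $\delta \le 1$, so $f_{m-1} > (1-\delta)/2$; thus $k \le m-1$ and $v_k$ is a vertex of $C$ distinct from $v_0$.

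The one genuine obstacle is that the total weight of $C$ may be far larger than $1$, so a single ``shortcut'' of $C$ need not split it into two light arcs and a naive one-shot argument breaks; the parity-split distances circumvent this, since the intermediate-value step on $(f_i)$ locates a vertex that is moderately far from $v_0$ along walks of \emph{both} parities, with fact (ii) furnishing the lower bound. The remaining points are routine bookkeeping: finiteness of $d^\pm$ on $C$, the standard fact that an odd closed walk contains an odd cycle, and the parity tracking in (iii).
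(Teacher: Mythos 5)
Your proof is correct, but it takes a genuinely different route from the paper's. The paper argues by contradiction and iterated shortcutting: assuming $d_x(v_0,v)\le(1-\delta)/2$ for every $v$ on $C$, it maintains an odd walk built from two short ``tails'' at $v_0$ and a shrinking segment of $C$, repeatedly splits that walk with a shortest path back to $v_0$ (keeping whichever side is odd), and eventually isolates a single edge of $C$ that must carry weight at least $\delta$. Your proof is direct and more structural: you introduce the parity-split distances $d^{\pm}$, observe that $d^{+}(v_0,v)+d^{-}(v_0,v)\ge 1$ because concatenating walks of opposite parity gives an odd closed walk which contains an odd cycle, and then run a discrete intermediate-value argument on the alternating sequence $f_i$, which moves by less than $\delta$ per step since each step appends one light edge of $C$ and flips parity. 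This locates a vertex at which both parity-distances exceed $(1-\delta)/2$, so $d_x$ does too. Both arguments ultimately rest on the same core fact --- an odd closed walk contains an odd cycle, and splitting an odd walk leaves one odd piece --- but you invoke it once up front to set up $d^{\pm}$, while the paper invokes it repeatedly inside the induction. Your version avoids the walk-maintenance bookkeeping and the contradiction framing; the paper's version extends to the inconsistent-cycle Lemma~\ref{lem:strucGen} with essentially no change (replace parity by composition of permutations), whereas yours would replace the binary $d^{\pm}$ by a family of per-shift distances $d^{g}$ indexed by $g\in\ZZ_k$ --- that also works but needs a bit more setup. The quantitative bound obtained is identical.
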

\begin{proof}
Let $C$ be an arbitrary odd cycle and let $v_0$ be an arbitrary vertex in $C$.
We will prove the stronger statement that if $d_x(v_0,v) \leq (1-\delta)/2$ for every $v \in C$, then there is an edge $e \in C$ with $x_e \geq \delta$.
Note that the contrapositive of this stronger statement clearly implies the lemma.

Since all odd cycles have total weight at least 1, any nontrivial odd walk (may visit some vertices multiple times) from $v_0$ to $v_0$ has total weight at least 1.
This is because any odd walk can be decomposed into edge-disjoint simple cycles, with at least one of which is odd.

We will prove the statement by an inductive argument.
In a general inductive step $t \geq 0$, we maintain a walk $C^{(t)}$ from $v_0$ to $v_0$ satisfying the following properties (see Figure~\ref{fig:short}):
\begin{enumerate}
\item $C^{(t)}$ is a nontrivial odd walk from $v_0$ to $v_0$,
consisting of three paths $P^{(t)}_1$-$P^{(t)}_C$-$P^{(t)}_2$.
\item $P^{(t)}_1$ and $P^{(t)}_2$ contain $v_0$, with $v_0$ being the first vertex of $P^{(t)}_1$ and $v_0$ being the last vertex of $P^{(t)}_2$.
\item Both $P^{(t)}_1$ and $P^{(t)}_2$ have total weight at most $(1-\delta)/2$.
\item $P^{(t)}_C$ is a continuous segment of $C$, i.e. if $C=(v_0,v_1,\ldots,v_k=v_0)$, then $P^{(t)}_C = (v_i, \ldots, v_j)$ for some $0 \leq i < j \leq k$.
In particular, $P^{(t)}_C \neq \emptyset$.
\end{enumerate}
Initially, $C^{(0)}$ is just the cycle $C$, with $P^{(0)}_1=P^{(0)}_2=\emptyset$ and $P^{(0)}_C=C$.

\begin{figure}
  \centering
  \begin{minipage}[t]{0.4\textwidth}
    \centering
    \begin{tikzpicture}[scale=0.6]
      \tikzstyle{invisible} = [fill=black!20!blue, inner sep=0pt, draw=black!20!blue, fill=black!20!blue]
      \begin{scope}[every node/.style={draw, circle, fill=gray!20, ultra thick}]
        \node (v0) at (0, 0) [label=below:$v_0$] {};
        \node (vLeft) at (-2, 3) {};
        \node (vRight) at (2, 3) {};
      \end{scope}

    \begin{scope}
      \path [->, ultra thick] (v0) edge node[left] {$P_1^{(t)}$} (vLeft);
      \path [<-, ultra thick] (v0) edge node [right] {$P_2^{(t)}$}  (vRight);
      \draw[->, ultra thick, line width=3pt, black!20!blue] (vLeft) to [bend left] (0, 5) to [bend left] (vRight);
    \end{scope}
  \end{tikzpicture}
  \end{minipage}
  \begin{minipage}[t]{0.4\textwidth}
    \centering
    \begin{tikzpicture}[scale=0.6]
      \begin{scope}[every node/.style={draw, circle, fill=gray!20, ultra thick}]
        \node (v0) at (0, 0) [label=below:$v_0$] {};
        \node (v1) at (1, 0) [label=below:$v_0$] {};
        \node (vLeft) at (-2, 3) {};
        \node (vRight) at (3, 3) {};
        \node (u) at (0, 5) [label=above:$u$] {};
        \node (u1) at (1, 5) [label=above:$u$] {};
      \end{scope}
      \begin{scope}
        \path [->, ultra thick] (v0) edge node[left] {$P_1^{(t)}$} (vLeft);
        \path [bend left, ->, line width=3pt, black!20!blue] (vLeft) edge node[left, black] {$P_{C_1}^{(t)}$} (u);
        \path [<-, line width=3pt, black!50] (v0) edge node[left, black] {$Q$} (u);
        \path [->, line width=3pt, black!50] (v1) edge node[right, black] {$Q$} (u1);
        \path [<-, ultra thick] (v1) edge node[right] {$P_2^{(t)}$} (vRight);
        \path [bend left, ->, line width=3pt, black!20!blue] (u1) edge node[right, black] {$P_{C_2}^{(t)}$} (vRight);
      \end{scope}
    \end{tikzpicture}
  \end{minipage}
  \caption{The paths involved in the proof of Lemma
    \ref{lem:struc}. The shortcut $Q$ is highlighted gray, and the
    cycle segments $P_{C_j}^{(t)}$ are highlighted blue. Since the
    walk we maintain is odd, one of the two walks we consider in the induction step (right figure) should be odd.}\label{fig:short}
\end{figure}
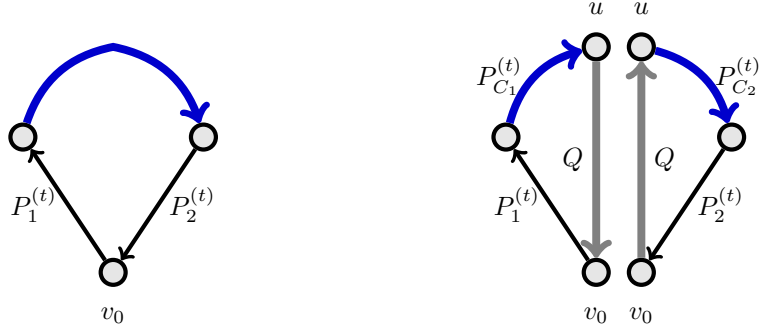

Let $w(P)$ denote the total weight of a path $P$, and let $|P|$ denote the number of edges in $P$.
Since $w(P^{(t)}_1), w(P^{(t)}_2) \leq (1-\delta)/2$, we must have $w(P^{(t)}_C) \geq \delta$, as $C^{(t)}$ is a nontrivial odd walk and thus the total weight is at least one.
The inductive step is to show that if $d_x(v_0,v) \leq (1-\delta)/2$ for all $v \in C$,
then we can construct $C^{(t+1)}$ from $C^{(t)}$ so that $C^{(t+1)}$ still satisfies the properties but $|P^{(t+1)}_C| < |P^{(t)}_C|$.
By applying this inductively, we will eventually construct a walk $C^{(T)}$ that satisfies the properties and $|P^{(T)}_C|=1$, and so $P^{(T)}_C$ is an edge of $C$ with weight $w(P^{(t)}_C) \geq \delta$, and this will complete the proof.

It remains to prove the inductive step (see Figure~\ref{fig:short}).
Let $C^{(t)}$ be a walk that satisfies the properties but $|P^{(t)}_C| \geq 2$.
Let $u$ be an internal vertex of $P^{(t)}_C$, which splits $P^{(t)}_C$ into $P^{(t)}_{C_1}$ and $P^{(t)}_{C_2}$, so that the walk $C^{(t)}$ consists of $P^{(t)}_1$-$P^{(t)}_{C_1}$-$P^{(t)}_{C_2}$-$P^{(t)}_2$.
Since $d_x(v_0,u) \leq (1-\delta)/2$,
there is a path $Q$ from $v_0$ to $u$ with $w(Q) \leq (1-\delta)/2$.
The path $Q$ splits the walk $C^{(t)}$ into two walks, $P^{(t)}_1$-$P^{(t)}_{C_1}$-$Q$ and $Q$-$P^{(t)}_{C_2}$-$P^{(t)}_2$
As $C^{(t)}$ is an odd walk,
a simple parity argument implies that
exactly one of these two walks must be odd,
say $P^{(t)}_1$-$P^{(t)}_{C_1}$-$Q$ (the other case is similar).
Then we let $C^{(t+1)} := P^{(t)}_1$-$P^{(t)}_{C_1}$-$Q$,
with $P^{(t+1)}_1 := P^{(t)}_1$, $P^{(t+1)}_C := P^{(t)}_{C_1}$,
and $P^{(t+1)}_1 := Q$.
It is straightforward to check that $C^{(t+1)}$ still satisfy all the properties and furthermore $|P^{(t+1)}_C| < |P^{(t)}_C|$, completing the proof of the induction step.
\end{proof}

\subsection{Proof of Corollary~\ref{cor:minuncut}} \label{sec:cor}

We are now ready to prove that the algorithm in Section~\ref{sec:rounding1} is an $O(r)$-approximation algorithm for {\sf Min-Uncut}.
In step $1$, since each edge $e$ in $F_1$ has $x_e \geq 1/2$, the total cost of edges in $F_1$ is
\[\sum_{e \in F_1} c_e \leq 2 \sum_{e \in F_1} c_e x_e \leq 2 \LP^\star.\]
Let $G' := G-F_1$ be the remaining graph.
By Lemma~\ref{lem:struc}, every odd cycle of $G'$ contains a pair of vertices $u,v$ with shortest path distance greater than $1/4$.
Let $\Delta=1/4$.
In a $(1/4)$-bounded partition $P$, no cluster can contain an odd cycle $C$
as otherwise the pair of vertices $u,v \in C$ with $d_x(u,v) > 1/4$ guaranteed by Lemma~\ref{lem:struc} would contradict that the cluster has weak diameter at most $1/4$.
So, each cluster induces a bipartite graph, and thus $G' - F_2$ is a bipartite graph where $F_2$ is the set of inter-cluster edges.
Therefore, $F_1 \cup F_2$ is an integral solution to the {\sf Odd Cycle Transversal} problem, and hence an integral solution to the {\sf Min-Uncut} problem.

To complete the proof, it remains to bound the cost of the edges in $F_2$.
We use Theorem~\ref{thm:copsrobbers} to sample from a distribution of partitions which is $\Delta$-bounded and $\OO(r)$-separating, and by definition (\ref{e:cut}) the probability of an edge $e$ being an inter-cluster edge is at most $\OO(r) \cdot x_e / \Delta = \OO(r) \cdot x_e$.
Therefore, the expected cost of $F_2$ is
\[\Exp\sqbr*{\sum_{e \in F_2} c_e}
= \sum_{e=uv \in G'} c_e \cdot \Pp_{P \sim \Par}[P(u) \ne P(v)]
= \sum_{e \in G'} c_e \cdot \OO(r) \cdot x_e = \OO(r) \sum_{e \in G'} c_e x_e
\leq \OO(r) \cdot \LP^\star.\]
Hence, the expected total cost of edges in $F_1 \cup F_2$ is $\OO(r) \cdot \LP^\star$, and this concludes the proof of Corollary~\ref{cor:minuncut} about $K_r$-minor free graphs.
For bounded genus graphs, the proof is the same except that we use Theorem~\ref{thm:cr2} which guarantees the partitioning scheme is $\OO(\log g)$-separating.

\section{$\M2Lin_k$} \label{sec:max2lin}

In this section, we show that the {\sf Min-Uncut} algorithm can be
readily generalized to the $\M2Lin_k$ problem.  The proofs will be
almost identical, so we just highlight the subtle differences.

One important feature of Theorem~\ref{thm:g2linalg} is that the
approximation ratio does not depend on the alphabet size.  The reason
is that the symmetry of the linear constraints allows us to define
inconsistent cycles in the original constraint graph, which will play
the same role as the odd cycles in the {\sf Min-Uncut} problem. This
allows us to reduce $\M2Lin_k$ to the {\sf Inconsistent Cycle
  Transversal} problem.



\subsection{Problem Formulation}

Consider the $\M2Lin_k$ problem where each constraint is of the form
$x_u - x_v = c_{uv} \pmod k$ where $c_{uv} \in \ZZ_k$.  The symmetry
property that we will exploit is that every permutation constraint
$\pi_{uv}$ satisfies: $\pi_{uv}(i + c) = \pi_{uv}(i) + c$ for all $i,c
\in {\mathbb Z}_k$.  Note that there are ``directions'' in the
constraints, as $\pi_{uv} = (\pi_{vu})^{-1}$ and they are in general
different.  In the {\sf Max-Cut} (or {\sf Min-Uncut}) problems, we
have $\pi_{uv} = \pi_{vu}$ as the alphabet set is of size two, and so
the concept of direction was not discussed.


\begin{definition}[Inconsistent cycles for $\M2Lin_k$]
Let $\Ins = (G, \Pi)$ be an instance of $\M2Lin_k$.
A cycle $(v_0, v_1, \ldots, v_l = v_0)$ of length $l$ in $G$ is called inconsistent if
\begin{equation}
  \pi_{v_l v_{l-1}} \circ \pi_{v_{l-1} v_{l-2}} \circ \cdots \circ
  \pi_{v_1 v_0}\neq {\rm Id}\label{eq:incons}
\end{equation} where ${\rm Id}$ is the
identity permutation.
By the aforementioned symmetry property of $\M2Lin_k$, if the
product $\pi$ of permutation constraints along a cycle is not the
identity permutation, then $\pi(i) \neq i$ for all $i \in \ZZ_k$. This
is the crucial property that we will use.

\end{definition}

The following lemma shows that $\M2Lin_k$ is equivalent to the
{\sf Inconsistent Cycle Transversal} problem.  The reason is that
whether a cycle is satisfiable is independent of which label to assign
to the starting vertex because of the symmetry property.  Note that
this does not hold for general Unique Games.
\begin{lemma} \label{fac:inconsistent}
  A $\M2Lin_k$ instance $\Ins = (G, \Pi)$ is satisfiable if and only if $G$ contains no inconsistent cycles.
\end{lemma}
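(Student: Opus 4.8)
The plan is to prove both directions of the equivalence in Lemma~\ref{fac:inconsistent}, using the symmetry property of $\M2Lin_k$ to decouple the satisfiability of a cycle from the choice of label at its basepoint.

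\textbf{The easy direction.} First I would show that if $\Ins$ is satisfiable, then $G$ has no inconsistent cycle. Suppose $x : V \to \ZZ_k$ satisfies every constraint, so $\pi_{uv}(x(u)) = x(v)$ for every edge $uv$. Given any cycle $(v_0, v_1, \ldots, v_l = v_0)$, applying the constraints edge by edge along the cycle yields $(\pi_{v_l v_{l-1}} \circ \cdots \circ \pi_{v_1 v_0})(x(v_0)) = x(v_l) = x(v_0)$. So the product permutation $\pi$ fixes the value $x(v_0)$; by the symmetry property quoted in the definition, a permutation arising as a product of $\M2Lin_k$ constraints either is the identity or fixes no point, hence $\pi = \Id$. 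Thus the cycle is consistent. This direction does not really need the contrapositive gymnastics — it is a direct computation — but the symmetry property is what lets us conclude $\pi = \Id$ from $\pi$ merely fixing one point.

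\textbf{The converse direction.} Now suppose $G$ has no inconsistent cycle; I want to build a satisfying assignment. Without loss of generality assume $G$ is connected (otherwise argue component by component). Fix a spanning tree $T$ rooted at some vertex $s$, set $x(s)$ arbitrarily, and define $x(v)$ for every other vertex $v$ by propagating along the unique tree path from $s$ to $v$: if the tree path is $s = u_0, u_1, \ldots, u_m = v$, set $x(v) = (\pi_{u_m u_{m-1}} \circ \cdots \circ \pi_{u_1 u_0})(x(s))$. By construction every tree edge is satisfied. The work is in checking the non-tree edges: for a non-tree edge $uv$, the tree paths from $s$ to $u$ and from $s$ to $v$ together with the edge $uv$ form a cycle (after cancelling the shared initial segment, a simple cycle through their meeting point — or just work with the closed walk, which decomposes into simple cycles all of which are consistent by hypothesis, so their product is $\Id$). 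Since this cycle is consistent, the product of constraints around it is the identity, which forces $\pi_{uv}(x(u)) = x(v)$. Hence every edge is satisfied and $\Ins$ is satisfiable.

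\textbf{Expected main obstacle.} The delicate point is the bookkeeping in the converse: making sure that the closed walk built from two tree paths plus a non-tree edge is handled correctly when one tree path is a prefix of the other or when they share a long initial segment. The clean way around this is to phrase everything in terms of closed walks rather than simple cycles, and to invoke the fact (used elsewhere in the paper) that a closed walk decomposes into edge-disjoint simple cycles, so if all simple cycles are consistent then the product of constraints around any closed walk is the identity; this makes the definition of $x$ independent of the chosen path and the verification of non-tree edges immediate. One should also double-check that the ``$\pi$ is identity or fixes no point'' property genuinely holds for \emph{products} of $\M2Lin_k$ constraints and not just single ones — but this is exactly the symmetry remark $\pi_{uv}(i+c) = \pi_{uv}(i)+c$, which is closed under composition, so any product $\pi$ satisfies $\pi(i+c) = \pi(i)+c$, and such a $\pi$ fixing any single $i$ must fix all of $\ZZ_k$. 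With that in hand both directions close.
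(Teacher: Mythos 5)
Your proof is correct and follows essentially the same approach as the paper's: the forward direction uses the symmetry property to promote ``fixes one point'' to ``is the identity,'' and the converse propagates an assignment along a spanning tree and verifies non-tree edges via the consistency of the resulting cycle. Your remark on decomposing the closed walk (formed by two tree paths plus a non-tree edge) into edge-disjoint simple cycles is a useful clarification of a step the paper treats a bit loosely, and your explicit check that the symmetry property is closed under composition is the right thing to verify, but neither changes the structure of the argument.
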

\begin{proof}
Suppose $\Ins$ is satisfiable.  Let $x$ be a satisfying assignment.
Consider an arbitrary cycle $C = (v_0, v_1, \ldots, v_l = v_0)$.  The
permutation constraints on $C$ enforce that $\pi_{v_l v_{l-1}} \circ
\pi_{v_{l-1} v_{l-2}} \circ \cdots \circ \pi_{v_1 v_0} (x(v_0)) =
x(v_0)$ where $x(v_0)$ is the value of $v_0$ in the assignment $x$.
By the symmetry property of the constraints, this implies that
$\pi_{v_l v_{l-1}} \circ \pi_{v_{l-1} v_{l-2}} \circ \cdots \circ
\pi_{v_1 v_0}$ is the identity permutation, and thus it is consistent.

Suppose $G$ has no inconsistent cycles.
Then we show that $G$ is satisfiable by the following trivial
algorithm.  Pick an arbitrary vertex $v_0 \in G$, and set $x(v_0)$ an
arbitrary value.  Then we propagate this assignment to every other
vertex $v$ by using an arbitrary path $P=(v_0, v_1, \ldots, v_l=v)$
from $v_0$ to $v$ and set $x(v) = \pi_{v_l v_{l-1}} \circ \pi_{v_{l-1}
  v_{l-2}} \circ \cdots \circ \pi_{v_1 v_0}(v_0)$.  In particular, we
can use a breadth first search tree to propagate the assignment.
Since $G$ has no inconsistent cycles, any two paths $P_1, P_2$ from
$v_0$ to $v$ will define the same value $x(v)$, as otherwise following
$P_1$ from $v_0$ to $v$ and following $P_2$ from $v$ to $v_0$ will
give us an inconsistent cycle.  This implies that any non-tree
constraint $uv$ is also satisfied by the assignment, as otherwise it
means that there are two paths from $v_0$ to $v$ defining different
values from $x(v)$, one path being the tree path from $v_0$ to $u$
plus the edge $uv$, and the other path being the tree path from $v_0$
to $v$.
\end{proof}

\subsection{Linear Programming Relaxation}

Given Lemma~\ref{fac:inconsistent}, we can formulate the
minimization version of the $\M2Lin_k$ problem, the $\Mi2Lin_k$ problem,
as the {\sf Inconsistent Cycle Transversal} problem, where the
objective is to find a subset of edges of minimum cost that intersects
all the inconsistent cycles.  We can then use the same linear
programming relaxation for the {\sf Min-Uncut} problem, with $\Cc$
being the set of inconsistent cycles in the constraint graph.  Again,
we can design a polynomial time separation oracle to check whether a
solution $x$ is feasible, by constructing the label extended graph and
using shortest path computations as in Section~\ref{sec:LP}
(see~\cite{GTUGC}).

\subsection{Rounding Algorithm and Analysis}

The rounding algorithm is exactly the same as in Section~\ref{sec:rounding1},
and so we do not repeat it here.
The analysis is also the same,
which relies on a generalization of Lemma~\ref{lem:struc}.

\begin{lemma}\label{lem:strucGen}
Let $G'$ be a graph with edge weight $x_e$ for each edge $e$.
Suppose every inconsistent cycle $C$ has total weight at least $1$,
i.e. $\sum_{e \in C} x_e \geq 1$.
If $0 \leq x_e < \delta \leq 1$ for every edge $e \in G'$,
then every inconsistent cycle $C$ in $G'$ contains a pair of vertices $u, v$
satisfying $d_x(u, v) > (1 - \delta)/2$,
where $d_x(u,v)$ denotes the shortest path distance from $u$ to $v$ induced by the edge weights $x_e$.
\end{lemma}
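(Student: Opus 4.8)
The plan is to reprove Lemma~\ref{lem:struc} almost verbatim, replacing ``odd'' by ``inconsistent'' throughout and replacing the parity argument by a group-theoretic one. Fix an arbitrary inconsistent cycle $C = (v_0, v_1, \ldots, v_k = v_0)$ and an arbitrary vertex $v_0 \in C$, and prove the same stronger statement: if $d_x(v_0, v) \le (1-\delta)/2$ for every $v \in C$, then some edge $e \in C$ has $x_e \ge \delta$; its contrapositive is exactly the lemma (take $u = v_0$). For a closed walk $W$, write $\pi(W)$ for the composition, in order of traversal, of the permutation constraints encountered along $W$, with $\pi_{vu} = \pi_{uv}^{-1}$ used whenever an edge is traversed backward; so $W$ ``is inconsistent'' will mean $\pi(W) \ne \mathrm{Id}$, and concatenation of walks corresponds to (reversed-order) composition of the $\pi$'s.

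First I would establish the analogue of the observation ``every nontrivial odd walk from $v_0$ to $v_0$ has weight at least $1$'': namely, if $W$ is a closed walk with $\pi(W) \ne \mathrm{Id}$, then $W$ contains a simple cycle $A$ (using a sub-multiset of the edge occurrences of $W$) with $\pi(A) \ne \mathrm{Id}$, whence $w(W) \ge w(A) \ge 1$ by the hypothesis of the lemma. This follows by induction on the number of edge occurrences of $W$: greedily peel off a simple cycle $A$ from $W$ (discarding any trivial back-and-forth, whose product is $\mathrm{Id}$ and which only shortens $W$); if $\pi(A) \ne \mathrm{Id}$ we are done, and if $\pi(A) = \mathrm{Id}$ then deleting the closed sub-walk $A$ from $W$ yields a strictly shorter closed walk $W'$ with $\pi(W') = \pi(W) \ne \mathrm{Id}$ (splicing a consistent closed sub-walk into the rest does not change the product), to which the induction applies.

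Then I would run the identical inductive construction of walks $C^{(t)} = P_1^{(t)}$-$P_C^{(t)}$-$P_2^{(t)}$ from the proof of Lemma~\ref{lem:struc}, maintaining the four invariants with ``odd'' replaced by ``inconsistent''; in particular $C^{(t)}$ is a nontrivial inconsistent closed walk at $v_0$, so by the previous paragraph $w(C^{(t)}) \ge 1$, and since $w(P_1^{(t)}), w(P_2^{(t)}) \le (1-\delta)/2$ we get $w(P_C^{(t)}) \ge \delta$. The only step that genuinely differs is the induction step: choosing an internal vertex $u$ of $P_C^{(t)}$ (which splits $P_C^{(t)}$ into $P_{C_1}^{(t)}$ and $P_{C_2}^{(t)}$) and a path $Q$ from $v_0$ to $u$ with $w(Q) \le (1-\delta)/2$, the shortcut splits $C^{(t)}$ into the two closed walks $W_1 = P_1^{(t)}$-$P_{C_1}^{(t)}$-$Q^{-1}$ and $W_2 = Q$-$P_{C_2}^{(t)}$-$P_2^{(t)}$, and a direct computation gives $\pi(W_2) \circ \pi(W_1) = \pi(C^{(t)}) \ne \mathrm{Id}$ because the two occurrences of $\pi(Q)^{\pm 1}$ cancel. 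Hence $\pi(W_i) \ne \mathrm{Id}$ for at least one $i \in \{1,2\}$, and we set $C^{(t+1)}$ to be that $W_i$; the bounds $w(P_1^{(t)}), w(P_2^{(t)}), w(Q) \le (1-\delta)/2$ keep invariant~(3), segments of $C$ remain segments of $C$ so invariant~(4) persists, and $|P_C^{(t+1)}| < |P_C^{(t)}|$. Iterating until $|P_C^{(T)}| = 1$ exhibits an edge of $C$ of weight $\ge \delta$, which is the desired contradiction with $x_e < \delta$.

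The main obstacle, and it is a mild one, is precisely this replacement of ``exactly one of the two shortcut walks is odd'' by ``at least one of them is inconsistent'': parity is a homomorphism to $\mathbb{Z}/2$ that adds along concatenations and immediately splits a walk's parity into the parities of its pieces, whereas here one must reason inside the noncommutative group generated by the $\pi_{uv}$'s, using that $\pi$ of a concatenation is the reversed-order product and that traversing an edge backward inserts an inverse. Once the identity $\pi(W_2)\circ\pi(W_1) = \pi(C^{(t)})$ is in hand, together with the auxiliary claim that an inconsistent closed walk contains an inconsistent simple cycle, everything else in the proof of Lemma~\ref{lem:struc} transfers without change. I would also remark that, in contrast to Lemma~\ref{fac:inconsistent}, this lemma does not use the extra symmetry property of $\M2Lin_k$ at all; what is special about $\M2Lin_k$ is only that minimizing unsatisfied constraints reduces to covering inconsistent cycles (Lemma~\ref{fac:inconsistent}), which is where the general Unique Games case breaks down.
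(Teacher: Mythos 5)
Your proof is correct and takes essentially the same approach as the paper's, which simply asserts that the proof of Lemma~\ref{lem:struc} goes through with ``odd'' replaced by ``inconsistent'' and then spells out only the shortcut step (arguing, by contradiction, that if both $P^{(t)}_1$-$P^{(t)}_{C_1}$-$Q$ and $Q$-$P^{(t)}_{C_2}$-$P^{(t)}_2$ were consistent then $\pi_{P_2}\circ\pi_{P_1}=\mathrm{Id}$, contradicting inconsistency of $C^{(t)}$); your direct computation $\pi(W_2)\circ\pi(W_1)=\pi(C^{(t)})\neq\mathrm{Id}$ is the same observation. One place where you are more careful than a literal ``replace odd by inconsistent'' reading: the auxiliary claim that a nontrivial inconsistent closed walk at $v_0$ has weight at least $1$. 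For odd walks the paper's justification decomposes the walk into edge-disjoint simple cycles and uses that parity is additive and commutative, which does not transfer verbatim to the non-abelian permutation products here; your greedy peeling of \emph{contiguous} simple cycles, noting that excising a consistent contiguous cycle $A$ from $W = W_1 \cdot A \cdot W_2$ leaves $\pi(W_1\cdot W_2)=\pi(W)$ unchanged, is the correct fix and closes a small gap the paper glosses over. Your remark that this lemma uses no symmetry property of $\M2Lin_k$ --- only the reduction in Lemma~\ref{fac:inconsistent} does --- is also accurate.
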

\begin{proof}
  The proof is essentially identical, by replacing every occurrence of
  ``odd'' by ``inconsistent''.  The only place that needs explanation
  is in the last paragraph of Lemma~\ref{lem:struc}, when we split an
  inconsistent walk using a path $Q$ from $v_0$ to $u$ into two walks
  $P^{(t)}_1$-$P^{(t)}_{C_1}$-$Q$ and $Q$-$P^{(t)}_{C_2}$-$P^{(t)}_2$,
  and we need to argue that at least one of these two walks is
  inconsistent.  Suppose both walks are consistent.  Let $\pi_{P_1}$
  be the composition of the permutation constraints from $v_0$ to $u$
  following the path $P^{(t)}_1$-$P^{(t)}_{C_1}$, $\pi_Q$ be the
  composition of the permutation constraints from $u$ to $v_0$
  following the path $Q$, and $\pi_{P_2}$ be the composition of the
  permutation constraints from $u$ to $v_0$ following the path
  $P^{(t)}_{C_2}$-$P^{(t)}_2$.  The first walk is consistent means
  that $\pi_Q \circ \pi_{P_1} = {\rm Id}$, and the second walk is
  consistent means that $\pi_{P_2} \circ (\pi_{Q})^{-1} = {\rm Id}$.
  But this implies that following the first walk and then the second
  walk is consistent, and thus the original walk is also consistent as
  ${\rm Id} = (\pi_{P_2} \circ (\pi_{Q})^{-1}) \circ (\pi_Q \circ
  \pi_{P_1}) = \pi_{P_2} \circ \pi_{P_1}$,
  contradicting that the original walk is
  inconsistent.  The rest of the proof is identical.
\end{proof}

With Lemma~\ref{lem:strucGen}, using exactly the same argument as in
Section~\ref{sec:cor} gives us the proof of
Theorem~\ref{thm:g2linalg}.

\section{General Unique Games}\label{sec:genug}

For general Unique Games, we could not reduce the problem to some cycle
cutting problem in the original constraint graph.  Instead, we modify
the LP-based algorithm of Gupta and Talwar~\cite{GTUGC} to prove Theorem~\ref{thm:ugkalg}.

\subsection{Linear Programming Relaxation}

Gupta and Talwar~\cite{GTUGC} use the following linear programming relaxation for the Unique Games problem.

\noindent
\begin{boxedminipage}{\textwidth}
  {\begin{flalign*}
    \min \LP^\star = \sum_{uv \in E} \frac{c_{uv}}{2} \sum_{l = 1}^k
    d(u, v, l) \tag{LP-UG}\label{lp:ug} &&
  \end{flalign*}
  subject to
  \begin{flalign*}{}
    \sum_{l = 1}^k x(u, l) & = 1 & \forall u \in V &\\
    d(u, v, l) & \ge \Abs{x(u, l) - x(v, \pi_{uv}(l))} & \quad \forall uv \in E,~l \in [k] & \\
    \sum_{i = 1}^t d(v_{i - 1}, v_i, l_{i - 1}) & \ge x(u, l_0) & \quad \forall C,~\forall u \in C,~\forall l_0 \in B_{u, C}\label{eq:cons} & \\
    1 \ge x(u, l) & \ge 0 & \forall u \in V,~\forall l \in [k]
  \end{flalign*}}%
\end{boxedminipage}
\newline

The intended value of $x(u,l)$ is $1$ if we assign the label $l$ to
vertex $u$ and $0$ otherwise, and so the first constraint enforces
that we assign exactly one label to each vertex.  The intended value
of $d(u,v,l)$ is $1$ if we assign $u$ to $l$ but not assign $v$ to
$\pi_{uv}(l)$ or vice versa and is $0$ otherwise.
So $\sum_{l=1}^k d(u,v,l)$ is two if the constraint $\pi_{uv}$ is not
satisfied and is $0$ if the constraint is satisfied, and therefore
the objective function is to minimize the total cost of the violated
constraints.  The third constraint is the inconsistent cycle
constraint in the label extended graph: $B_{u,C}$ is defined as the
set of ``bad'' labels at $u$, so that if $u$ is assigned some label in
$B_{u,C}$, then propagating this label along the cycle must violate
some permutation constraint in $C$.  So, the intention of the third
constraint is that if we assign some label in $B_{u,C}$ to vertex $u$,
then the number of violated constraint along the cycle $C$ must be at
least $1$.  This is similar to our inconsistent cycle constraint, but
defined on the label extended graph.

\subsection{Proof Overview} \label{sec:overview2}

Gupta and Talwar~\cite{GTUGC} gave a polynomial time randomized algorithm to
return an integral solution of cost $\OO(\log n) \cdot \LP^\star$ from
a feasible solution to the LP with objective value $\LP^\star$.

The main technique in their rounding algorithm is the use of a
low average distortion tree to propagate an assignment
from a vertex.  Their propagation rounding algorithm picks an
arbitrary vertex $u \in V$ and assigns it a random label $l_u$
according to the probability distribution defined by $x(u,l)$.  Then
they design a correlated sampling scheme to sample a label $l_v$ for a
neighbor $v$ of $u$ satisfying the properties that
$\Pp[l_v = l] = x(v,l)$ and
$\Pp[l_v \neq \pi_{uv}(l_u)] \leq \sum_{l=1}^k d(u,v,l)$.
 They use this
correlated sampling to propagate the assignment from the starting
vertex to every vertex in the graph using the low average distortion tree.
Their approximation ratio comes from the average distortion $\OO(\log n)$ of
the tree given by the FRT embedding~\cite{FRT}, which can not be
improved even for planar graphs.

We will still use the propagation rounding method of Gupta and Talwar,
but we apply it to different trees.  In~\cite{GTUGC}, the tree $T$
needs not be a spanning tree in the constraint graph (i.e. some edges
in the tree may not exist in the graph), and this adds some
complication to the analysis.  In our application, all tree edges will
be graph edges and we can use a simpler lemma in their proof.  For an
edge $uv \in E$, we let $d_G(u,v) := \sum_{l=1}^k d(u,v,l)$, and let
$d_T(u,v) := \sum_{xy \in P} d_G(x,y)$ where $P$ is the unique path
from $u$ to $v$ in the tree $T$.

\begin{lemma}[Lemma 3.1 in \cite{GTUGC}]\label{lemma:main}
  Let $x$ be the assignment produced by the propagation rounding algorithm using correlated sampling along a tree $T$.
For every edge $uv \in G$, we have
  \begin{equation*}
    \Pp[x(v) \ne \pi_{uv}(x(u))] \le d_G(u, v) + 2 d_T(u, v).
  \end{equation*}%
\end{lemma}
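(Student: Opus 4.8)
The plan is to prove Lemma~\ref{lemma:main} by following the correlated sampling argument of Gupta and Talwar, but exploiting the fact that the tree $T$ here is a subgraph of $G$ so that every tree edge carries a genuine LP value $d_G(\cdot,\cdot)$. First I would set up the propagation process formally: fix the root of $T$, assign it a random label according to the marginal distribution $x(\mathrm{root},\cdot)$, and then for each tree edge $xy$ (with $x$ the parent of $y$) use the correlated sampling primitive to pick a label for $y$ given the label already chosen for $x$. The two properties I would record for this primitive are exactly those stated in the overview: the marginal $\Pp[l_y = l] = x(y,l)$ is preserved, and $\Pp[l_y \ne \pi_{xy}(l_x)] \le d_G(x,y)$. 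I would either cite these as the correlated-sampling guarantee or include a one-paragraph construction (threshold/coupling along the unit interval) verifying them; the marginal-preservation property is what lets the induction propagate cleanly down the tree.

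Next I would bound, for a tree edge $xy$, the probability that propagation ``breaks'' the constraint on $xy$, i.e. $\Pp[x(y) \ne \pi_{xy}(x(x))]$. Since $x(y)$ is sampled correlated with $x(x)$ directly, this is at most $d_G(x,y)$ by the second property above. Then for an arbitrary graph edge $uv \in E$, let $P = (u = w_0, w_1, \ldots, w_m = v)$ be the unique $u$–$v$ path in $T$. The event $\{x(v) \ne \pi_{uv}(x(u))\}$ can only occur if either some tree edge $w_{i-1}w_i$ along $P$ is broken by the propagation, or all tree edges along $P$ are ``consistent'' with their permutations but the non-tree edge $uv$ itself is violated by the labels chosen for $u$ and $v$. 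A union bound over the $m$ tree edges contributes $\sum_{i=1}^m d_G(w_{i-1},w_i) = d_T(u,v)$. For the last term I would use the coupling identity: if every tree edge on $P$ is consistent, then $x(v) = \pi_{w_{m-1}w_m}\circ\cdots\circ\pi_{w_1w_0}(x(u))$, so the event reduces to the labels $x(u),x(v)$ (which have the correct marginals $x(u,\cdot)$ and $x(v,\cdot)$) disagreeing with $\pi_{uv}$; bounding this probability by $d_G(u,v) + d_T(u,v)$ (using the triangle-type inequality $\sum_l |x(u,l)-x(v,\pi_{uv}(l))| \le \sum_l d(u,v,l)$ plus the distortion along $P$) gives the claimed $d_G(u,v) + 2d_T(u,v)$. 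In total, $\Pp[x(v)\ne\pi_{uv}(x(u))] \le d_G(u,v) + 2 d_T(u,v)$.

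The main obstacle, I expect, is handling the second (non-tree-edge) term carefully: one has to track how the marginal distributions of the labels at $u$ and $v$ relate, even though these labels were generated through a long chain of correlated samples rather than independently. The clean way is to note that marginal preservation at each step means the final label distribution at any vertex $w$ equals $x(w,\cdot)$ regardless of the path, and then to compare the two distributions $x(u,\cdot)$ and $(\pi_{uv}^{-1}$ applied to$)\ x(v,\cdot)$ using the second LP constraint $d(u,v,l) \ge |x(u,l) - x(v,\pi_{uv}(l))|$ together with the accumulated distortion bound along $P$ from the third (inconsistent-cycle) LP constraint. Assembling these pieces—correlated sampling guarantees, the union bound over $P$, and the marginal comparison—yields the lemma, and since the whole argument is a specialization of Gupta and Talwar's Lemma~3.1 to the case where $T \subseteq G$, I would keep the write-up short and point to \cite{GTUGC} for the routine verifications.
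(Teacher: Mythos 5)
First, note that the paper does not prove this lemma itself: it is imported verbatim as Lemma~3.1 of Gupta--Talwar \cite{GTUGC}, so the only proof to compare against is theirs. Your skeleton matches that proof: root the tree, propagate by correlated sampling so that marginals are preserved and each tree edge $w_{i-1}w_i$ mispropagates with probability at most $d_G(w_{i-1},w_i)$, take a union bound over the tree path $P$ from $u$ to $v$ (contributing $d_T(u,v)$), and then bound the residual event where all tree edges propagate consistently but $uv$ is still violated.

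The gap is in how you handle that residual term. Conditioned on consistent propagation along $P$, the label $x(v)=\pi_P(x(u))$ is a \emph{deterministic} function of $x(u)$, so the event $\{x(v)\ne\pi_{uv}(x(u))\}$ is exactly the event $\{x(u)\in B\}$, where $B$ is the set of labels on which $\pi_P$ and $\pi_{uv}$ disagree --- equivalently, $B = B_{u,C}$ for the cycle $C$ consisting of $P$ followed by the edge $vu$. Your proposal instead tries to bound this by comparing the marginal distributions $x(u,\cdot)$ and $x(v,\cdot)$ via the second LP constraint $d(u,v,l)\ge|x(u,l)-x(v,\pi_{uv}(l))|$; this cannot work, since the violation event depends on the joint law of $(x(u),x(v))$, and under the conditioning that joint law is degenerate. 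The correct mechanism is the third (cycle) LP constraint applied to $C$: for each $l_0\in B_{u,C}$ it gives $x(u,l_0)\le\sum_i d(w_{i-1},w_i,l_{i-1})+d(v,u,\pi_P(l_0))$, and summing over $l_0\in B_{u,C}$ (and extending the sum to all labels, using that the propagated labels $l_{i-1}$ range bijectively over $[k]$) yields $\Pp[x(u)\in B_{u,C}]\le d_T(u,v)+d_G(u,v)$. Adding the union-bound term gives $d_G(u,v)+2d_T(u,v)$. You name the cycle constraint in passing, but the ``marginal comparison'' framing as written would not produce the bound; replace it with the bad-label-set argument above and the proof is complete.
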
%

The idea of our algorithm is very simple.
We use the strongly $\Delta$-bounded $\OO(r^2)$-separating partitioning scheme to decompose the graph, using $d_G(u,v)$ as the weight of edge $uv \in E(G)$.
As each cluster is of strong diameter $\Delta$,
we simply use a shortest path tree in each cluster to do the propagation rounding and apply Lemma~\ref{lemma:main} to prove Theorem~\ref{thm:ugkalg}.
We will choose $\Delta$ to balance the losses in the two steps.


\subsection{Rounding Algorithm}

\begin{algorithm}{$\UG_k$}{A feasible solution $x,d$ to LP-UG with value $\LP^\star$ on a $K_r$-minor free graph.}{An integral solution to LP-UG with total cost $O(r) \cdot \sqrt{\LP^\star}$.}
  \begin{enumerate}
  \item Set the weight $w_{uv}$ of each edge $uv$ to be $d_G(u,v)$.\\
        Sample a strongly $\Delta$-bounded $\OO(r^2)$-separating partition $P$ guaranteed by Theorem~\ref{thm:crstrong}.
  \item Let $F$ be the set of inter-cluster edges in $P$, i.e. edges $uv$ with $P(u) \neq P(v)$.\\
        Delete $F$ from $G$.
  \item In each cluster $C_j$ in the remaining graph, compute a shortest path tree $T_j$.
  \item Run Gupta-Talwar propagation rounding on each cluster $C_j$ using tree $T_j$.
  \item Return the solution $x,d$ as the union of the solution in each cluster.
  \end{enumerate}
\end{algorithm}

\subsection{Proof of Theorem~\ref{thm:ugkalg}}

Since the partitioning scheme is $\OO(r^2)$-separating,
by definition \eqref{e:cut},
each edge $e$ is deleted with probability
\[
\Pp[{\rm edge~} uv \textrm{ is deleted}] = \OO(r^2) \cdot \frac{d_G(u, v)}{\Delta}.
\]
Hence, the expected total cost of the deleted edges in Step $2$ is
\[\sum_{uv \in E} c_{uv} \cdot \Pp[{\rm edge~} uv {\rm~is~deleted}]
= \OO(r^2/\Delta) \sum_{uv \in E} c_{uv} \cdot d_G(u,v) =
\OO(r^2/\Delta) \cdot \LP^\star.\] We just assume that all of these
edges will be violated by the assignment we produce at the end.  Since
each cluster $C_j$ has strong diameter $\Delta$, the shortest path
tree $T_j$ satisfies
\[d_{T_j}(u, v) \le \Delta \quad \forall u, v \in C_j.\]
Using the Gupta-Talwar propagation rounding, by
Lemma~\ref{lemma:main}, each edge in cluster $C_j$ is violated with
probability $O(\Delta)$, and therefore the total cost of the violating
constraints in the Step $4$ is at most $O(\Delta) \sum_{e \in E}
c_e$.  By choosing $\Delta = r \cdot \sqrt{ \LP^\star / \sum_{e \in E}
  c_e}$, the total cost of the violating constraints is at most $r \cdot
\sqrt{\LP^\star \cdot \sum_{e \in E} c_e}$.  When $\LP^\star = \ee
\cdot \sum_{e \in E} c_e$, the total cost of the violating constraint
is at most $r \sqrt{\ee} \sum_{e \in E} c_e$, proving
Theorem~\ref{thm:ugkalg} for $K_r$-minor free graphs.  For bounded
genus graphs, we just use the bound in Theorem~\ref{thm:cr2} to
replace $r^2$ by $\log g$, and the same proof gives
Theorem~\ref{thm:ugkalg} for bounded genus graphs.

\section{Discussions and Open Problems}

The algorithm for general Unique Games has a similar structure to the
subexponential time algorithm~\cite{ABS}.  Both algorithms first
deletes a small fraction of edges so that each remaining component has
some nice properties, and then solve the problem in each component
using a propagation rounding method.  The nice property in~\cite{ABS}
is that each component has few small eigen\-values (which qualitatively
means that the components have good expansion property), and the
decomposition result is based on random walks.  The nice property in
this paper is that each component has small diameter, and the
decomposition result is based on some combinatorial methods.
The key to these algorithms is
some graph decomposition result.  Is there some property that captures
both good expansion and small diameter so that graph decomposition is
still possible?
Is there some property that captures both good expansion and small diameter so
that propagation rounding still works?

Another open question is whether the ideas in this paper can be
generalized to handle graphs with many small eigenvalues.

\subparagraph*{Acknowledgements.}

We thank Tsz Chiu Kwok, Akshay Ramachandran and Hong Zhou for useful
discussions.

\end{document}